\numberwithin{equation}{section}
\newtheorem{theorem}{Theorem}[section]
\newtheorem{remark}[theorem]{Remark}
\newtheorem{lem}[theorem]{Lemma}
\newtheorem{proposition}[theorem]{Proposition}
\subjclass[2010]{Primary 81V55; Secondary 49M15}
\keywords{Hartree-Fock method, critical point, Newton's method, Grassmann manifold, retraction}
\title[Critical points of the Hartree-Fock functional]{Critical points of the discretized Hartree-Fock functional of connected molecules preserving structures of molecular fragments}
\author{Sohei Ashida}
\begin{document}
\maketitle

\begin{abstract}
In this paper a method to obtain a critical point of the discretized Hartree-Fock functional from an approximate critical point is given. The method is based on Newton's method on the Grassmann manifold. We apply Newton's method regarding the discretized Hartree-Fock functional as a function of a density matrix. The density matrix is an orthogonal projection in the linear space corresponding to the discretization onto a subspace whose dimension is equal to the number of electrons. The set of all such matrices are regarded as a Grassmann manifold. We develop a differential calculus on the Grassmann manifold introducing a new retraction (a mapping from the tangent bundle to the manifold itself) that enables us to calculate all derivatives. In order to obtain reasonable estimates, we assume that the basis functions of  the discretization are localized functions in a certain sense. As an application we construct a critical point of a molecule composed connecting several molecules using critical points of the Hartree-Fock functional corresponding to the molecules as the basis functions under several assumptions. By the error estimate of Newton's method we can see that the electronic structures of the molecular fragments are preserved.
\end{abstract}

\section{Introduction and statement of the result}\label{firstsec}

Construction of approximate eigenfunctions of electronic Hamiltonians is a fundamental problem in quantum chemistry. For $N,n\in\mathbb N$ the electronic Hamiltonian $H$ of $N$ electrons and $n$ nuclei is written as
$$H:=-\sum_{j=1}^N\Delta_{x_j}+\sum_{j=1}^NV(x_j)+\sum_{1\leq j<k\leq N}\frac{1}{|x_j-x_k|},$$
where
$$V(x):=-\sum_{l=1}^n\frac{Z_l}{|x-\bar x_l|}.$$
Here $x_j,\ 1\leq j\leq N$ (resp., $\bar x_l,\ 1\leq  l\leq n$) is the position of the $j$th electron (resp., $l$th nucleus), $\Delta_{x_j}$ is the Laplacian with respect to $x_j$, and $Z_l\in \mathbb N,\ 1\leq l\leq n$ is the atomic number of $l$th nucleus. The Hamiltonian $H$ acts on functions of the coordinates of electrons.

Actually, an electron has an internal state called spin. The spin does not affect the framework of the present result. However, when we consider connection of molecules, it seems quite unnatural to consider spin-independent functions, because it is known to be important for formation of chemical bonds that two electrons with different spins can occupy the same spatial function. Therefore, we introduce the spin of electrons here. The Hilbert space of spin internal state is $\mathbb C^2$. Thus the Hilbert space of a state of an electron is the tensor product $L^2(\mathbb R^3;\mathbb C)\otimes \mathbb C^2$. Let $\begin{pmatrix} 1\\ 0\end{pmatrix}$ and $\begin{pmatrix} 0\\ 1\end{pmatrix}$ be a basis of $\mathbb C^2$ and $L^2(\mathbb R^3;\mathbb C^2)$ be a Hilbert space equipped with the inner product $\int\varphi_1^*(x)\varphi_2(x)dx+\int\tilde\varphi_1^*(x)\tilde\varphi_2(x)dx$ for $\begin{pmatrix} \varphi_1\\ \tilde\varphi_1\end{pmatrix}, \begin{pmatrix} \varphi_2\\ \tilde\varphi_2\end{pmatrix}\in L^2(\mathbb R^3;\mathbb C^2)$. Then $L^2(\mathbb R^3;\mathbb C)\otimes \mathbb C^2$ is isomorphic to $L^2(\mathbb R^3;\mathbb C^2)$ by the mapping
$$L^2(\mathbb R^3;\mathbb C)\otimes \mathbb C^2\ni\varphi\otimes \begin{pmatrix} 1\\ 0\end{pmatrix}+\tilde\varphi\otimes\begin{pmatrix} 0\\ 1\end{pmatrix}\mapsto\begin{pmatrix} \varphi\\ \tilde\varphi\end{pmatrix}\in L^2(\mathbb R^3;\mathbb C^2).$$
In fact, we do not need tensor product and vector valued functions to introduce the spin. Let $\mathcal S:=\left\{\frac{1}{2},-\frac{1}{2}\right\}$ be a set composed of two elements. Let $L^2(\mathbb R^3\times\mathcal S;\mathbb C)$ be a Hilbert space equipped with the inner product
$$\langle\psi_1,\psi_2\rangle:=\sum_{\omega\in\mathcal S}\int_{\mathbb R^3}\psi_1^*(x,\omega)\psi_2(x,\omega)dx,$$
for $\psi_1,\psi_2\in L^2(\mathbb R^3\times\mathcal S;\mathbb C)$.
Then $L^2(\mathbb R^3;\mathbb C^2)$ is isomorphic to $L^2(\mathbb R^3\times\mathcal S;\mathbb C)$ by the mapping
$$L^2(\mathbb R^3;\mathbb C^2)\ni\begin{pmatrix} \varphi\\ \tilde\varphi\end{pmatrix}\mapsto \varphi(x)\alpha(\omega)+\tilde\varphi(x)\beta(\omega)\in L^2(\mathbb R^3\times\mathcal S;\mathbb C),$$
where $\alpha(\omega)$ and $\beta(\omega)$ are functions of $\omega\in\mathcal S$ such that $\alpha\left(\frac{1}{2}\right)=1$, $\alpha\left(-\frac{1}{2}\right)=0$, $\beta\left(\frac{1}{2}\right)=0$ and $\beta\left(-\frac{1}{2}\right)=1$. Thus in order to introduce the spin we have only to introduce the spin variable $\omega$ and replace $\int_{\mathbb R^3}$ by $\sum_{\omega\in\mathcal S}\int_{\mathbb R^3}$ when we integrate functions. This is a standard way to introduce the spin in quantum chemistry. Hereafter, we omit the notation and write as if there are not spin variables. We denote $L^2(\mathbb R^3\times\mathcal S;\mathbb C)$ and the Sobolev space $H^2(\mathbb R^3\times\mathcal S;\mathbb C)$ by $\mathcal H$ and $\tilde {\mathcal H}$ respectively.

The motivation of this paper is as follows. Electronic structures of different molecules correspond to eigenfunctions of different Hamiltonians $H$ in which the numbers of nuclei and electrons, nuclear positions $\bar x_l$ and atomic numbers $Z_l$ are different. However, from the observations of chemical experiments we know that the electronic structures of different molecules are not mutually irrelevant. When we synthesize a molecule from smaller molecules, the synthesized molecule is composed of parts corresponding to the original smaller molecules, and the electronic structures of these parts would be almost the same as those of the original molecules. Mathematically rigorous justification of this fact would be rather difficult, because in the eigenvalue problems of the Hamiltonians $H$ we consider eigenfunctions that are spreading around the whole molecule. Thus if the numbers of nuclei and electrons, nuclear positions and atomic numbers are different, the eigenvalue problems are treated as completely different problems. This approach is missing something about electronic structures of molecules. Intuitively, the eigenfunction of the synthesized molecule would be obtained by cutting and pasting local structures of molecular fragments. However, mathematically rigorous theoretical results in this direction can not be found in literature. We execute this idea under the framework of the Hartree-Fock approximation in this paper. More precisely, we prove that there exists a critical point of the Hartree-Fock functional of density matrices the corresponding wave function of which is close to a critical point of the Hartree-Fock functional corresponding to the molecular fragments under discretization by basis functions.

Let us introduce the Hartree-Fock functional. Let $\Phi=(\varphi_1,\dots,\varphi_N),\ \varphi_j\in \tilde{\mathcal H}$ be a tuple of functions. We define the Slater determinant $\Psi$ of $\Phi$ by
$$\Psi(x_1,\dots,x_N):=\sum_{\sigma\in S_N}(\mathrm{sgn}\, \sigma)\varphi_1(x_{\sigma(1)})\dotsm\varphi_N(x_{\sigma(N)}),$$
where $S_N$ is the symmetric group and $\mathrm{sgn}\, \sigma$ is the signature of $\sigma$. Then the Hartree-Fock functional is defined by
\begin{align*}
E(\Phi)&:=\langle \Psi,H\Psi\rangle\\
&=\sum_{j=1}^N\langle\varphi_j,h\varphi_j\rangle+\frac{1}{2}\sum_{j=1}^N\sum_{k=1}^N\int\int\varphi_j^*(x)\varphi_j(x)\frac{1}{|x-y|}\varphi_k^*(y)\varphi_k(y)dxdy\\
&\qquad-\frac{1}{2}\sum_{j=1}^N\sum_{k=1}^N\int\int\varphi_j^*(x)\varphi_k(x)\frac{1}{|x-y|}\varphi_k^*(y)\varphi_j(y)dxdy,
\end{align*}
where $h:=-\Delta+V(x)$ and we omitted the range $\mathbb R^3$ of the integral. Hereafter, when we omit the ranges of integrals, we assume that the ranges are $\mathbb R^3$.
Critical points $\Phi$ of $E(\Phi)$ under the constraints $\langle\varphi_j,\varphi_k\rangle=\delta_{jk}$ give approximations of eigenfunctions of $H$, and they are used also for further approximations.
Our purpose in this paper is to obtain a critical point of $E(\Phi)$ from an approximate critical point under discretization and apply the result to connected molecules to show existence of electronic structures of molecules preserving those of molecular fragments. A critical point of $E(\Phi)$ under the constraints $\langle\varphi_j,\varphi_k\rangle=\delta_{jk}$ satisfies the Euler-Lagrange equation
\begin{equation}\label{myeq1.0.0}
\mathcal F(\Phi)\varphi_j=\sum_{k=1}^N\epsilon_{jk}\varphi_k,\ j=1,\dots,N,
\end{equation}
where $\epsilon_{jk}\in\mathbb C,\ 1\leq j,k\leq N$ compose an Hermitian matrix and $\mathcal F(\Phi)$ is defined by
\begin{align*}
(\mathcal F(\Phi)\psi)(x):=&(h\psi)(x)+\sum_{j=1}^N\left(\int\frac{1}{|x-y|}\varphi_j^*(y)\varphi_j(y)dy\right)\psi(x)\\
&-\sum_{j=1}^N\left(\int\frac{1}{|x-y|}\varphi_j^*(y)\psi(y)dy\right)\varphi_j(x),
\end{align*}
and called Fock operator. Throughout this paper we denote by $(a_{jk})$ the matrix whose components are $a_{jk}$. After a unitary change
\begin{equation}\label{myeq1.0}
\varphi_j^{\mathrm{New}}=\sum a_{jk}\varphi_{k},
\end{equation}
by an appropriate $N\times N$ unitary matrix $(a_{jk})$, the new functions $(\varphi_1^{\mathrm{New}},\dots,\varphi_N^{\mathrm{New}})$ are orhonormal and satisfy the Hartree-Fock equation
$$\mathcal F(\Phi)\varphi_j=\epsilon_{j}\varphi_j,\ j=1,\dots,N,$$
with some real numbers $(\epsilon_1,\dots,\epsilon_N)$.

Actually, we consider a discretized functional. Let $\{\phi_1,\dots,\phi_{\nu}\},\ \phi_j\in\tilde{\mathcal H}, \nu\in\mathbb N, \nu>N$ be an orthonormal basis of some finite-dimensional subspace of $\mathcal H$ which includes a function very close to a critical point of $E(\Phi)$. Assume that $(\varphi_1,\dots,\varphi_N)\in\bigoplus_{j=1}^N\mathcal H$ is expressed as linear combinations of $\phi_k$:
\begin{equation}\label{myeq1.1}
\varphi_j=\sum_{k=1}^{\nu}c_{jk}\phi_k,\ j=1,\dots,N,
\end{equation}
where $c_{jk}\in\mathbb C$. Set $p_{jk}:=\sum_{l=1}^Nc_{lj}c^*_{lk}$. The matrix $P:=(p_{jk})$ is called a density matrix. Substituting \eqref{myeq1.1} into $E(\Phi)$ and using the density matrix $P$, the Hartree-Fock functional $E(\Phi)$ is rewritten as follows.
$$\mathcal E(P):=\sum_{j,k}h_{kj}p_{jk}+\frac{1}{2}\sum_{j,k,l,m}p_{jk}p_{lm}([kj|ml]-[kl|mj]),$$
where $h_{kj}:=\langle \phi_k,h\phi_j\rangle$ and $[kj|lm]:=\int\int\phi_k^*(x)\phi_j(x)\frac{1}{|x-y|}\phi_l^*(y)\phi_m(y)dxdy$. Here the ranges of summation with respect to $j,k,l$ and $m$ are $\{1,\dots,\nu\}$. Hereafter, when the ranges are not designated as in $\max_j$ or $\sum_j$, we assume that the ranges are $\{1,\dots,\nu\}$. If $(\varphi_1,\dots,\varphi_N)$ in \eqref{myeq1.1} is orthonormal, the corresponding density matrix $P$ is an orthogonal projection matrix of rank $N$. The set of all $\nu\times\nu$ orthogonal projection matrices of rank $N$ forms the Grassmann manifold $Gr(N,\nu)$ (cf. Section \ref{secondsec}). Thus we seek critical points of $\mathcal E(P)$ regarded as a function on $Gr(N,\nu)$.

For a critical point $P$ of $\mathcal E(P)$ there exist $N$ $\nu$-dimensional vectors $\mathbf c_j=(c_{j1},\dots,\newline c_{j\nu})^T,\ j=1,\dots,N$ such that $\{\bm c_1,\dots,\bm c_N\}$ is an orthonormal basis of $\mathrm{Ran}\, P$. The corresponding functions $\varphi_j$ constructed by \eqref{myeq1.1} from such $\bm c_j$ are called occupied orbitals. The tuple $\Phi:=(\varphi_1,\dots,\varphi_N)$ of occupied orbital is a solution to the Euler-Lagrange equation \eqref{myeq1.0.0} with some Hermitian matrix $(\epsilon_{jk})$ under the discretization, that is, the occupied orbitals span a direct sum of some eigenspaces of $\mathcal F(\Phi)$ regarded as an operator (matrix) on $\mathcal L(\phi_1,\dots,\phi_{\nu})$, where $\mathcal L(\phi_1,\dots,\phi_{\nu})$ is the linear subspace of $\mathcal H$ spanned by $\phi_1,\dots,\phi_{\nu}$. Since the dimension of $\mathcal L(\phi_1,\dots,\phi_{\nu})$ is $\nu$, there also exist other $\nu-N$ orthonormal eigenfunctions $\{\varphi_{N+1},\dots,\varphi_{\nu}\}$ of $\mathcal F(\Phi)$. These functions are called unoccupied orbitals. Usually the occupied orbitals are  assumed to be associated with $N$ lowest eigenvalues of $\mathcal F(\Phi)$. In this paper we shall also call the elements of the basis $\{\phi_1,\dots,\phi_{\nu}\}$ molecular orbitals, because they are regarded as occupied or unoccupied orbitals of molecular fragments in applications. Intuitively, in our result we assume that the bases $\{\phi_1,\dots,\phi_N\}$ and $\{\phi_{N+1},\dots,\phi_{\nu}\}$ are approximate occupied orbitals and approximate unoccupied orbitals respectively in a certain sense.

Our method is based on Newton's method. We construct a critical point of $\mathcal E(P)$ starting from the initial density matrix $P^0:= \mathrm{diag}\, (1,\dots,1,0,\dots,0)$, where $\mathrm{diag}\, (a_1,\dots,a_{\nu})$ denotes the diagonal matrix whose diagonal elements are $a_1,\dots,a_{\nu}$ and the first $N$ diagonal elements of $P^0$ are $1$. Actually, we apply the Kantorovi\v c semi-local convergence theorem. Since the Grassmann manifold is not a linear space, we use a local diffeomorphism called a retraction from a tangent space of the Grassmann manifold to the manifold itself (cf. Section \ref{secondsec}). For the Kantorovi\v c theorem we need estimates of derivatives of $\mathcal E(P)$. In order to obtain reasonable estimates we assume that $\{\phi_1,\dots,\phi_{\nu}\}$ are localized in the sense below. In quantum chemistry localized molecular orbitals are constructed practically from a critical point $\Phi=(\varphi_1,\dots,\varphi_N)$ of $E(\Phi)$ by a unitary matrix $(a_{jk})$ as in \eqref{myeq1.0} to obtain molecular orbitals displaying chemical bonds between atoms in the molecule (see e.g. \cite{Je}). The assumptions are needed so that the estimates do not depend linearly or worse on the sizes of molecules and the Hessian matrix of the functional becomes invertible in Newton's method.

We use weights $w_{jk},\ 1\leq j,k\leq \nu$ in order to express localization of molecular orbitals.  Intuitively, we assume $w_{jk}\sim |q_j-q_k|^s$ for some $s>1$ when $|q_j-q_k|$ is large enough, where $q_j\in\mathbb R^3$ is the reference point of the molecular orbital $\phi_j$, that is, $|\phi_j(x)|$ can have relatively large values only in a bounded region including $q_j$ and decays as $x$ leaves away from the region. 
We assume that $w_{jk}$ satisfy the following.
\begin{itemize}
\item [{\textbf (W)}] The matrix $(w_{jk})$ is symmetric, $w_{jk}\geq 1$, for any $j,k$, and the following conditions hold. 
\begin{itemize}
\item[(i)] $\max_j\sum_{k}w_{jk}^{-1}\leq 1$.
\item[(ii)]$w_{jk}^{-1}w_{kl}^{-1}\leq w_{jl}^{-1}$.
\end{itemize}
\end{itemize}

We set $[jk|lm]:=\int\int\phi_j^*(x)\phi_k(x)\frac{1}{|x-y|}\phi_l^*(y)\phi_m(y)dxdy$. We assume the following for integrals involving $\phi_j$.
\begin{itemize}
\item [{\textbf (LMO)}] There exist $\nu\times \nu$ symmetric matrices $(v_{jl}), (u_{jl})$ and constants $\tilde C, \hat C,\check C>0$ such that $ v_{jl} \geq 1$, $u_{jl}>0$ for any $j,l$ and the following conditions hold. 
\begin{itemize}
\item[(i)] There exists a constant $0<\tilde\epsilon<1$ such that $|[jk|lm]|\leq\tilde\epsilon  v_{jl}^{-1}w_{jk}^{-1}w_{lm}^{-1}$, when $\{j,k\}\neq \{l,m\}$ in addition to 
$j\neq k$ or $l\neq m$.
\item[(ii)] $\max_{l}\sum_j v_{jl}^{-1}\leq  1$.
\item[(iii)] $|[jk|lm]|\leq u_{jl}^{-1} w_{jk}^{-1} w_{lm}^{-1}$ for any $j,k,l,m$.
\item[(iv)] $\max_j\sum_{l}u^{-1}_{jl}\leq \tilde C$.
\item[(v)] $|[jk|lm]|\leq \hat Cw_{jk}^{-1} w_{lm}^{-1}$ for any $j,k,l,m$.
\item[(vi)] $\max_j\sum_k|\langle\nabla\phi_j,\nabla\phi_k\rangle|<\check C$.
\end{itemize}
\end{itemize}

We also need estimates of interactions between the orbitals. Set $J_o:=\{1,\dots,N\}$ and $J_u:=\{N+1,\dots,\nu\}$.

\begin{itemize}
\item[{\textbf (OI)}]
\begin{itemize}
\item[(i)] There exists a constant $0<\epsilon<1$ such that
\begin{align*}
\max_{k\in J_u}\sum_{j\in J_o}|\langle\phi_k,\mathcal F(\Phi^0)\phi_j\rangle|&\leq\epsilon,\\
\max_{j\in J_o}\sum_{k\in J_u}|\langle\phi_k,\mathcal F(\Phi^0)\phi_j\rangle|&\leq\epsilon,
\end{align*}
where $\Phi^0:=(\phi_1,\dots,\phi_N)$.
\item[(ii)] There exists a constant $\delta>0$ such that
\begin{align*}
\max_{j\in J_o}\sum_{\substack{k\in J_o \\ k\neq j}}|\langle\phi_j,\mathcal F(\Phi^0)\phi_k\rangle|&\leq\delta,\\
\max_{j\in J_u}\sum_{\substack{k\in J_u \\ k\neq j}}|\langle\phi_j,\mathcal F(\Phi^0)\phi_k\rangle|&\leq\delta.
\end{align*}
\item[(iii)] There exists a constant $\gamma>0$ such that for any pair of $j\in J_o$ and $k\in J_u$ we have
$$\langle\phi_k,\mathcal F(\Phi^0)\phi_k\rangle-\langle\phi_j,\mathcal F(\Phi^0)\phi_j\rangle-\langle kj||kj\rangle>\gamma,$$
where $\langle jl||km\rangle:=[jk|lm]-[jm|lk]$.
\end{itemize}
\end{itemize}

\begin{remark}\label{gaprem}
\begin{itemize}
\item[(i)] The condition (OI) (i) means that $\phi_j,\ j\in J_o$ are close to occupied orbitals in the sense that $\mathcal F(\Phi^0)\phi_j$ is close to a linear combination of $\{\phi_1,\dots,\phi_N\}$, and thus $\langle\phi_k,\mathcal F(\Phi^0)\phi_j\rangle$ is close to $0$ for $\phi_k,\ k\in J_u$.
\item[(ii)] The condition (OI) (ii) means that the interaction between $\phi_j$ and $\phi_k,\ k\neq j$ is somewhat small.
\item[(iii)] The condition (OI) (iii) means that the gap between the energies of $\phi_k$ and $\phi_j$ which are expected to be an approximate unoccupied and occupied orbital respectively is greater than the interaction by the Coulomb potential between the orbitals which is a driving force of the conversion from $\phi_j$ to $\phi_k$.
\end{itemize}
\end{remark}

Finally we assume a condition for the interactions between nuclei and the orbitals.
\begin{itemize}
\item[{\textbf (NI)}] There exist positive numbers $\breve u_{jkl}$ for $1\leq j,k\leq \nu,\ 1\leq l\leq n$ and a constant $\breve C>0$ such that
\begin{equation}\label{myeq1.3}
\left|\int\frac{Z_l}{|x-\bar x_l|}\phi_j^*(x)\phi_k(x)dx\right| \leq \breve u_{jkl}^{-1}w_{jk}^{-1},
\end{equation}
and $\max_{j,k}\sum_l\breve u_{jkl}^{-1}\leq \breve C$.
\end{itemize}

Next let us introduce norms of matrices. We define a norm $\lVert A\rVert_{1,\infty}$ for a matrix $A=(a_{jk})\in \mathbb C^{\nu\times\nu}$ by
$$\lVert A\rVert_{1,\infty}:=\max\{\lVert A\rVert_{1},\lVert A\rVert_{\infty}\},$$
where
$$\lVert A\rVert_1:=\sup_{\bm c\neq 0}\frac{\lVert A\bm c\rVert_1}{\lVert \bm c\rVert_1},\ \lVert A\rVert_{\infty}:=\sup_{\bm c\neq 0}\frac{\lVert A\bm c\rVert_{\infty}}{\lVert \bm c\rVert_{\infty}}.$$
Here $\bm c=(c_1,\dots,c_{\nu})^T\in\mathbb R^{\nu}$, $\lVert\bm c\rVert_1:=\sum_{j=1}^{\nu}|c_j|$ and $\lVert\bm c\rVert_{\infty}:=\max_{1\leq j\leq\nu}|c_j|$. As is well-known the norms are given explicitly by
$$\lVert A\rVert_1=\max_k\sum_{j=1}^{\nu}|a_{jk}|,\ \lVert A\rVert_{\infty}=\max_j\sum_{k=1}^{\nu}|a_{jk}|.$$
One reason why the norms $\lVert \cdot\rVert_1$ and $\lVert \cdot\rVert_{\infty}$ are preferable is that $\lVert I_{\nu}\rVert_1=\lVert I_{\nu}\rVert_{\infty}=1$ for the $\nu\times \nu$ identity matrix $I_\nu$ independently of $\nu$. Note also that $\lVert A\tilde A\rVert_1\leq\lVert A\rVert_1\lVert \tilde A\rVert_1$ and $\lVert A\tilde A\rVert_{\infty}\leq\lVert A\rVert_{\infty}\lVert \tilde A\rVert_{\infty}$, and thus $\lVert A\tilde A\rVert_{1,\infty}\leq\lVert A\rVert_{1,\infty}\lVert \tilde A\rVert_{1,\infty}$ for $A,\tilde A\in\mathbb C^{\nu\times\nu}$.
We use the norm $\lVert A\rVert_{1,\infty}$  defined in the same way also for $A\in\mathbb C^{N\times (\nu-N)}$.

We define a density matrix $P^0\in \mathbb C^{\nu\times\nu}$ by $P^0:=\mathrm{diag}\, (1,\dots,1,0,\dots,0)$, where $1$ appears $N$ times. The following main result means that there exists a critical point of $\mathcal E(P)$ close to $P^0$ under several assumptions.
\begin{theorem}\label{mainthm}
Assume (LMO), (OI), (NI) and $\frac{\gamma}{2}-\delta-2\tilde\epsilon>0$. Set $c_*:=\frac{1}{\frac{\gamma}{2}-\delta-2\tilde\epsilon}$. For a constant $0<\hat\epsilon<1$ we define $L_{\hat \epsilon}:=C_{\hat \epsilon}+3D_{\hat \epsilon}$, where 
\begin{align*}
C_{\hat\epsilon}:=&6(\tilde C+\hat C+\breve C+\check C)(1+\hat \epsilon)^2(1-\hat \epsilon^2)^{-3}\\
&\cdot\{1+2\hat \epsilon(1+(1-\hat \epsilon^2)^{-1}(1+\hat \epsilon)(1+3\hat \epsilon))\\
&+2(1-\hat \epsilon^2)^{-2}(1+\hat \epsilon)^2\hat \epsilon^2)\},
\end{align*}
\begin{align*}
D_{\hat\epsilon}:=&2(\tilde C+\hat C)(1+\hat \epsilon)(1-\hat \epsilon^2)^{-2}\{1+(1-\hat \epsilon^2)^{-1}(1+\hat \epsilon)\hat \epsilon\}\\
&\cdot\{ 1+(1-\hat \epsilon^2)^{-1}(1+\hat \epsilon)(1+5\hat \epsilon)+4(1-\hat \epsilon^2)^{-2}(1+\hat \epsilon)^2\hat \epsilon^2\}.
\end{align*}
Suppose that there exists $0<\hat\epsilon<1$ such that $c_*^2\epsilon L_{\hat\epsilon}<\frac{1}{2}$ and $\hat\epsilon>\tau_*$, where $\tau_*:=\frac{1-(1-2\theta)^{1/2}}{c_*L_{\hat\epsilon}}=\frac{2c_*\epsilon}{1+(1-2\theta)^{1/2}}<2c_*\epsilon$ with $\theta:=c_*^2\epsilon L_{\hat\epsilon}$. Then for such $\hat \epsilon$ there exists a critical point $P^{\infty}$ of $\mathcal E(P)$ such that $\lVert P^{\infty}-P^0\rVert_{1,\infty}\leq \tau_*\{1+(1-\tau_*^2)^{-1}(1+\tau_*)^2\}$ which is unique in a suitable neighborhood of $P^0$.
\end{theorem}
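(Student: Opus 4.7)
The plan is to apply the Kantorovich semi-local convergence theorem to Newton's method for $\mathcal E$ viewed as a smooth function on $Gr(N,\nu)$, with starting point $P^0$. Because the Grassmann manifold is not a linear space, the Newton step at an iterate $P^k$ is the tangent vector $\xi^k\in T_{P^k}Gr(N,\nu)$ solving $\mathrm{Hess}\,\mathcal E(P^k)\,\xi^k=-\mathrm{grad}\,\mathcal E(P^k)$, and $P^{k+1}$ is obtained by transferring $\xi^k$ back to the manifold via the retraction developed in Section \ref{secondsec}. The Kantorovich theorem requires three quantitative inputs at $P^0$, all measured in $\lVert\cdot\rVert_{1,\infty}$: a bound on the gradient, a bound on the inverse Hessian, and a Lipschitz-type bound on the Hessian on the ball of radius $\hat\epsilon$ around $P^0$ (with the retraction built in).

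The gradient bound is essentially a bookkeeping exercise. Identifying $T_{P^0}Gr(N,\nu)$ with $\mathbb C^{N\times(\nu-N)}$, the $(j,k)$-entry of $\mathrm{grad}\,\mathcal E(P^0)$ is, up to a universal factor, $\langle\phi_k,\mathcal F(\Phi^0)\phi_j\rangle$ for $j\in J_o,\ k\in J_u$, so the two inequalities in (OI) (i) are exactly the required $\lVert\cdot\rVert_\infty$ and $\lVert\cdot\rVert_1$ bounds and yield $\lVert\mathrm{grad}\,\mathcal E(P^0)\rVert_{1,\infty}\leq\epsilon$.

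For the Hessian inverse I would split $\mathrm{Hess}\,\mathcal E(P^0)$ into its diagonal part, whose $(j,k)$-entry is $\langle\phi_k,\mathcal F(\Phi^0)\phi_k\rangle-\langle\phi_j,\mathcal F(\Phi^0)\phi_j\rangle-\langle kj||kj\rangle$ (bounded below by $\gamma$ by (OI) (iii)), and an off-diagonal perturbation consisting of the one-body Fock entries from (OI) (ii) (row/column sum $\le\delta$) together with the exchange-type two-electron integrals controlled through (LMO) (i)-(ii), (W) by $2\tilde\epsilon$. A diagonal-dominance / Neumann series argument in $\lVert\cdot\rVert_{1,\infty}$ then delivers $\lVert(\mathrm{Hess}\,\mathcal E(P^0))^{-1}\rVert_{1,\infty}\leq c_*=1/(\gamma/2-\delta-2\tilde\epsilon)$. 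For the Lipschitz bound I would parametrize a neighborhood of $P^0$ through the retraction, which involves factors $(I+\xi^*\xi)^{\pm 1/2}$, and differentiate $\mathcal E$ three times along such paths. Each term is bounded by the two-electron integral estimates in (LMO) (iii)-(v), the nuclear estimate (NI) and the kinetic estimate (LMO) (vi); after collapsing the double sums with the row-sum conditions in (W), (LMO) (ii), (iv) and (NI), the only surviving dimension-independent constants are $\tilde C,\hat C,\breve C,\check C$, while the rational functions of $\hat\epsilon$ arising from Neumann expansions of $(I+\xi^*\xi)^{\pm1/2}$ on $\{\lVert\xi\rVert_{1,\infty}\leq\hat\epsilon\}$ combine into exactly $C_{\hat\epsilon}$ (the retraction correction) and $D_{\hat\epsilon}$ (the intrinsic third derivative of $\mathcal E$), giving total Lipschitz constant $L_{\hat\epsilon}=C_{\hat\epsilon}+3D_{\hat\epsilon}$.

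With these three inputs the Kantorovich theorem applies in the standard form: $\theta=c_*^2\epsilon L_{\hat\epsilon}<1/2$ guarantees that the Newton iterates $\{P^k\}$ form a Cauchy sequence, the assumption $\hat\epsilon>\tau_*$ keeps all iterates inside the ball where the Hessian remains invertible and the Lipschitz bound holds, and the limit $P^\infty$ is a critical point satisfying $\lVert P^\infty-P^0\rVert_{1,\infty}\leq\tau_*\{1+(1-\tau_*^2)^{-1}(1+\tau_*)^2\}$, where the extra factor absorbs the quadratic correction of the retraction against the linear Kantorovich radius. Uniqueness in a suitable neighborhood is the standard Kantorovich uniqueness clause. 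The main obstacle I anticipate is the Hessian inverse bound: three structurally different contributions (the diagonal gap, the off-diagonal Fock entries, and the two-electron exchange integrals) must be controlled simultaneously in both $\lVert\cdot\rVert_1$ and $\lVert\cdot\rVert_\infty$ and, crucially, uniformly in $\nu$ through the localization weights $w_{jk}$ and $v_{jl}$; this is precisely why the conditions (W), (LMO) (i)-(ii), and the packaged inequality $\gamma/2-\delta-2\tilde\epsilon>0$ are formulated as they are.
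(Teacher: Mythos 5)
Your outline has the right skeleton—Kantorovich with three quantitative inputs controlled by (OI), (LMO), (NI) in $\lVert\cdot\rVert_{1,\infty}$—but it diverges from the paper in a way that matters and contains two concrete inaccuracies. First, the setup: you describe a genuine Riemannian Newton iteration, with each step computed in the varying tangent space $T_{P^k}Gr(N,\nu)$ and transported back via a retraction at $P^k$. The paper instead \emph{fixes} the base point once and for all: it sets $F(\xi):=(\mathcal E(R_{P^0}(\xi)))'$ on the single Banach space $\mathcal X=T_{P^0}Gr(N,\nu)$, applies the classical (Euclidean) Kantorovich theorem to the equation $F(\xi)=0$ in this fixed linear space, and only at the very end maps the root $\xi^\infty$ to $P^\infty=R_{P^0}(\xi^\infty)$. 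This is why Proposition~\ref{Kant} can be cited as stated, without any manifold version of Kantorovich or any parallel-transport bookkeeping; your version would need a Riemannian Kantorovich theorem and additional uniformity arguments across changing tangent spaces, which the paper deliberately avoids.

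Second, the retraction is not the polar one. You write that the retraction "involves factors $(I+\xi^*\xi)^{\pm 1/2}$," but the paper introduces a new rational retraction $R_P(\xi)=(I_\nu+\xi)Y[I_N+Y^*\xi^*\xi Y]^{-1}Y^*(I_\nu+\xi^*)$ precisely because matrix square roots are hard to differentiate to third order; the whole point of Proposition~\ref{derivatives} is that this retraction gives closed-form expressions for $dR_P$, $d^2R_P$, $d^3R_P$ out of which $C_{\hat\epsilon}$ and $D_{\hat\epsilon}$ are assembled (via Lemma~\ref{sumi} and Lemma~\ref{intnorm}, not via a Neumann expansion of a square root). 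Third, there is an arithmetic inconsistency in your Hessian bound: you state the diagonal entries are $\langle\phi_k,\mathcal F(\Phi^0)\phi_k\rangle-\langle\phi_j,\mathcal F(\Phi^0)\phi_j\rangle-\langle kj||kj\rangle$, "bounded below by $\gamma$," yet conclude $c_*=1/(\gamma/2-\delta-2\tilde\epsilon)$. In fact the diagonal entry of $F'(0)$ in the $\{\eta_{jk},\hat\eta_{jk}\}$ basis is $\tfrac12\bigl(\langle\phi_k,\mathcal F(\Phi^0)\phi_k\rangle-\langle\phi_j,\mathcal F(\Phi^0)\phi_j\rangle-\langle kj||kj\rangle\bigr)\geq\gamma/2$; the factor $\tfrac12$ comes from the $\mathrm{sym}$ normalization $\eta_{jk}=\tfrac12(\tilde E_{jk}+\tilde E_{kj})$ and is essential for the Neumann-series bound $\lVert K^{-1}\rVert_{\mathcal L(\mathcal X)}\leq 2/\gamma$ that produces the stated $c_*$. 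Missing this factor would give a wrong $c_*$ and thereby wrong $\theta$, $\tau_*$ and final radius.
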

\begin{remark}
(i) As $\epsilon$ becomes small, $c^2\epsilon L_{\hat\epsilon}$ and $\tau_*$ become small. Hence for $\epsilon$ small enough there exists $\hat\epsilon$ such that the conditions hold.

(ii) We introduced the conditions (LMO), (NI) and (OI) so that the constant $L_{\hat\epsilon}$ does not become very large as the size of the molecule increases and the Hessian matrix of the functional becomes invertible in Newton's method. Note that our estimate is still not completely independent of the size of the molecule. The constants $\tilde C$ in (LMO) and $\breve C$ in (NI) would depend on the size. It seems that in order to obtain completely size-independent estimates we need to control the electronic density and show that the electrostatic potentials by electrons and nuclei cancel out during the process of Newton's method.
\end{remark}

Our method is based on a local diffeomorphism called a retraction from a tangent space of the Grassmann manifold to the manifold itself. We introduce a new retraction which enables us to calculate any derivative of a function on the Grassmann manifold. Using the differential calculus provided by the retraction we estimate the constants in the Kantorovi\v c semi-local convergence theorem.

\section{Retraction and differential calculus on the Grassmann manifold}\label{secondsec}
In this section we consider analysis on the complex Grassmann manifold. In particular, we introduce a method to treat the Grassmann manifold locally as a linear space. The results in this section are independent of specific applications in their nature. The complex Grassmann manifold $Gr(N,\nu)$ is the set of all $N$-dimensional subspaces of $\mathbb C^{\nu}$, where $N\leq \nu$. There exists a one-to-one correspondence between the $N$-dimensional subspaces and the orthogonal projection matrices of rank $N$:
$$Gr(N,\nu)\simeq\{P\in \mathbb C^{\nu\times \nu}:P^*=P,\ P^2=P,\ \mathrm{tr}(P)=N\}.$$
The Grassmann manifold $Gr(N,\nu)$ is an $N(\nu-N)$-dimensional complex manifold. However, since we are not concerned with the analyticity, we regard it as $2N(\nu-N)$-dimensional real manifold. Let $M$ be a real manifold. A smooth mapping $R$ from the tangent bundle $TM$ onto $M$ is called a retraction if the followings hold (cf. \cite{AMS}):
\begin{itemize}
\item[(i)] $R_x(0_x)=x$, where $0_x$ denotes the zero element of $T_xM$ and $R_x$ is the restriction of $R$ to $T_xM$.
\item[(ii)] With the canonical identification $T_{0_x}T_xM\simeq T_xM$, $R_x$ satisfies
$$dR_x(0_x)=id_{T_xM},$$
where $dR_x(0_x)$ denotes the differential of $R_x$ at $0_x$, and $id_{T_xM}$ denotes the identity mapping on $T_xM$.
\end{itemize}

Let us consider the tangent space of $Gr(N,\nu)$. Since $Gr(N,\nu)$ is a subset of $\mathbb C^{\nu\times \nu}$ ($\mathbb C^{\nu\times \nu}$ can be regarded as a $2(\nu\times\nu)$-dimensional real manifold), tangent vectors are also identified with elements in $\mathbb C^{\nu\times\nu}$. We have $T_PGr(N,\nu)=\{\xi\in \mathbb C^{\nu\times\nu}:\xi^*=\xi,\ \xi=\xi P+P\xi\}$ (cf. \cite[proof of Proposition 2.1]{SI}).
Let $P\in Gr(N,\nu)$ and $\bm y_1,\dots,\bm y_N\in \mathbb C^{\nu}$ be an orthonormal basis of $\mathrm{Ran}\, P$. Then if we set $Y:=(\bm y_1 \dotsm \bm y_N)$, we obviously have $P=YY^*$, where $(\bm y_1 \dotsm \bm y_N)$ is the matrix with columns $\bm y_1,\dots,\bm y_N$. Let $\bm y_{N+1},\dots,\bm y_{\nu}\in \mathbb C^{\nu}$ be an orthonormal basis of $\mathrm{Ker}\, P$. Then we have $I_{\nu}-P=Y_{\perp}Y_{\perp}^*$, where $Y_{\perp}:=(\bm y_{N+1}\dotsm \bm y_{\nu})$ and $I_{\nu}$ is the $\nu\times\nu$ identity matrix.
Using these notations the set of matrices
\begin{equation}\label{myeq2.1}
\begin{split}
\{\eta_{jk}:=&\mathrm{sym}(YE_{jk}Y^*_{\perp}): 1\leq j\leq N,\ N+1\leq k\leq \nu\}\\
&\cup\{\hat \eta_{jk}:=\mathrm{sym}(Y(iE_{jk})Y^*_{\perp}): 1\leq j\leq N,\ N+1\leq k\leq \nu\}
\end{split}
\end{equation}
forms a basis of the tangent space $T_PGr(N,\nu)$ of $Gr(N,\nu)$ at $P$ (cf. \cite[proof of Proposition 2.1]{SI}), where $E_{jk}$ is the $N\times (\nu-N)$ matrix whose $(j,k-N)$-component is $1$ and the others are $0$, and $\mathrm{sym} (A):=\frac{1}{2}(A+A^*)$. (Note that we consider $Gr(N,\nu)$ regarded as a $2N (\nu-N)$-dimensional real manifold, and thus the tangent space is regarded as a real linear space with respect to multiplication by real numbers.) Thus the general tangent vector $\xi\in T_PGr(N,\nu)$ is given by $\xi=\mathrm{sym}(YBY^*_{\perp})$, where $B\in\mathbb C^{N\times(\nu-N)}$. If we set a block matrix $B'\in \mathbb C^{\nu\times\nu}$ by
$$B':=\frac{1}{2}\begin{pmatrix}
0 &B\\
B^* &0
\end{pmatrix},$$
we can also write $\xi=QB'Q^*$, where $Q:=(Y Y_{\perp})$.

When we regard the Grassmann manifold as the set of orthonormal projections, a retraction based on the QR decomposition was introduced by \cite{SI}. They defined the retraction by $\tilde R_P(\xi):=\mathrm{qf}((I_{\nu}+\xi)Y)(\mathrm{qf}((I_{\nu}+\xi)Y)^T$, where $\mathrm{qf}(A)$ denote the $Q$ factor of the QR decomposition of $A=QR$, that is, $\mathrm{qf}(A)=Q$. However, this retraction is not suitable for the present purpose, because derivatives at $\xi\neq0$ and derivatives of order greater than $1$ of the retraction are difficult to calculate. In this paper we introduce a new retraction suitable for calculations of the derivatives. We define a mapping $R_P: T_PGr(N,\nu)\to Gr(N,\nu)$ by 
\begin{align*}
R_P(\xi)&=(I_{\nu}+\xi)Y[Y^*(I_{\nu}+\xi^*)(I_{\nu}+\xi)Y]^{-1}Y^*(I_{\nu}+\xi^*)\\
&=(I_{\nu}+\xi)Y[I_N+Y^*\xi^*\xi Y]^{-1}Y^*(I_{\nu}+\xi^*),
\end{align*}
for $\xi\in T_PGr(N,\nu)$, where we used $Y^*(I_{\nu}+\xi^*)(I_{\nu}+\xi)Y=Y^*Y+Y^*\xi^*\xi Y=I_N+Y^*\xi^*\xi Y$ (note that $Y^*\xi Y=Y^*\xi^*Y=0$ for $\xi\in T_PGr(N,\nu)$, which follows from that \eqref{myeq2.1} is a basis of $T_PGr(N,\nu)$). Here note that since $Y^*\xi^*\xi Y$ is positive, $I_N+Y^*\xi^*\xi Y$ is invertible. Since $K(K^*K)^{-1}K^*$ is the orthogonal projection onto $\mathrm{Ran}\, K$ for a non-singular matrix $K$, we can see that $R_P(\xi)\in Gr(N,\nu)$. Since $YY^*=P$, the condition $R_P(0_P)=P$ is satisfied. Moreover, we can see that $dR_P(0_P)[\zeta]=\zeta YY^*+YY^*\zeta^*=\zeta P+P\zeta=\zeta$ for $\zeta\in T_{0_P}T_PGr(N,\nu)= T_PGr(N,\nu)$, where we used $\zeta^*=\zeta$. Therefore, the mapping $(P,\xi)\mapsto R_P(\xi)$ is a retraction on $Gr(N,\nu)$. The Fr\' echet derivatives of $R_P(\xi)$ are summarized as follows. (Note that $R_P(\xi)$ is defined formally for any $\xi\in \mathbb C^{\nu\times\nu}$, and therefore, we can consider the Fr\'echet derivative of $R_P(\xi)$ in the usual way.) Hereafter, we omit the subscript $P$ of $0_P$.

\begin{proposition}\label{derivatives}
For $\xi,\zeta_1,\zeta_2,\zeta_3\in T_PGr(N,\nu)$ we have
\begin{itemize}
\item[(i)]
$$dR_P(\xi;\zeta_1)=\zeta_1 YZ^{-1}X^*-XZ^{-1}Y^*(\zeta_1^*\xi+\xi^*\zeta_1)YZ^{-1}X^*+XZ^{-1}Y^*\zeta_1^*,$$
where $X:=(I_{\nu}+\xi)Y$ and $Z:=I_N+Y^*\xi^*\xi Y$. In particular $dR_P(0;\zeta_1)=\zeta_1$.

\item[(ii)] 
$$d^2R_P(\xi;\zeta_1,\zeta_2)=\tilde R(\xi;\zeta_1,\zeta_2)+\tilde R(\xi;\zeta_2,\zeta_1),$$
where
\begin{align*}
\tilde R(\xi;\zeta_1,\zeta_2):=&-\zeta_1YZ^{-1}Y^*(\zeta_2^*\xi+\xi^*\zeta_2)YZ^{-1}X^*+\zeta_1YZ^{-1}Y^*\zeta_2^*\\
&-XZ^{-1}Y^*\zeta_1^*\zeta_2YZ^{-1}X^*\\
&+XZ^{-1}Y^*(\zeta_1^*\xi+\xi^*\zeta_1)YZ^{-1}Y^*(\zeta_2^*\xi+\xi^*\zeta_2)YZ^{-1}X^*\\
&-XZ^{-1}Y^*(\zeta_1^*\xi+\xi^*\zeta_1)YZ^{-1}Y^*\zeta_2^*.
\end{align*}
In particular
\begin{equation}\label{myeq2.1.1}
\begin{split}
d^2R_{P}(0;\zeta_1,\zeta_2)&=\zeta_1 YY^*\zeta_2^*+ \zeta_2YY^*\zeta_1^*-YY^*\zeta_1^*\zeta_2YY^*-YY^*\zeta_2^*\zeta_1 YY^*\\
&=\zeta_1 P\zeta_2^*+ \zeta_2P\zeta_1^*-P\zeta_1^*\zeta_2P-P\zeta_2^*\zeta_1 P,
\end{split}
\end{equation}
and 
\begin{equation}\label{myeq2.1.2}
d^2R_P(0;\eta_{jk},\hat\eta_{jk})=0.
\end{equation}

\item[(iii)]
\begin{align*}
d^3R_P(\xi;\zeta_1,\zeta_2,\zeta_3)=&\hat R(\xi;\zeta_1,\zeta_2,\zeta_3)+\hat R(\xi;\zeta_1,\zeta_3,\zeta_2)+\hat R(\xi;\zeta_2,\zeta_1,\zeta_3)\\
&+\hat R(\xi;\zeta_2,\zeta_3,\zeta_1)+\hat R(\xi;\zeta_3,\zeta_1,\zeta_2)+\hat R(\xi;\zeta_3,\zeta_2,\zeta_1),
\end{align*}
where
\begin{align*}
\hat R(\xi;\zeta_1,\zeta_2,\zeta_3):=&-\zeta_1YZ^{-1}Y^*\zeta_2^*\zeta_3YZ^{-1}X^*\\
&+\zeta_1YZ^{-1}Y^*(\zeta_2^*\xi+\xi^*\zeta_2)YZ^{-1}Y^*(\zeta^*_3\xi+\xi^*\zeta_3)YZ^{-1}X^*\\
&-\zeta_1YZ^{-1}Y^*(\zeta_2^*\xi+\xi^*\zeta_2)YZ^{-1}Y^*\zeta_3^*\\
&+XZ^{-1}Y^*\zeta_1^*\zeta_2YZ^{-1}Y^*(\zeta_3^*\xi+\xi^*\zeta_3)YZ^{-1}X^*\\
&-XZ^{-1}Y^*\zeta_1^*\zeta_2YZ^{-1}Y^*\zeta_3^*\\
&+XZ^{-1}Y^*(\zeta_1^*\xi+\xi^*\zeta_1)YZ^{-1}Y^*\zeta_2^*\zeta_3YZ^{-1}X^*\\
&-XZ^{-1}Y^*(\zeta_1^*\xi+\xi^*\zeta_1)YZ^{-1}Y^*(\zeta_2^*\xi+\xi^*\zeta_2)YZ^{-1}\\
&\qquad \cdot Y^*(\zeta_3^*\xi+\xi^*\zeta_3)YZ^{-1}X^*\\
&+XZ^{-1}Y^*(\zeta_1^*\xi+\xi^*\zeta_1)YZ^{-1}Y^*(\zeta_2^*\xi+\xi^*\zeta_2)YZ^{-1}Y^*\zeta_3^*.
\end{align*}
\end{itemize}
\end{proposition}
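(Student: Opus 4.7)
The plan is to write $R_P(\xi)=XZ^{-1}X^{*}$ with $X=X(\xi):=(I_{\nu}+\xi)Y$ and $Z=Z(\xi):=I_N+Y^{*}\xi^{*}\xi Y$, and then apply the Leibniz rule and the chain rule repeatedly. The three building-block differentials I will use are
\begin{align*}
dX(\xi;\zeta)&=\zeta Y,\qquad dX^{*}(\xi;\zeta)=Y^{*}\zeta^{*},\\
dZ(\xi;\zeta)&=Y^{*}(\zeta^{*}\xi+\xi^{*}\zeta)Y,\qquad d(Z^{-1})(\xi;\zeta)=-Z^{-1}Y^{*}(\zeta^{*}\xi+\xi^{*}\zeta)YZ^{-1}.
\end{align*}
Applying Leibniz to the product $XZ^{-1}X^{*}$ with one of the three factors differentiated in direction $\zeta_1$ immediately yields the three summands of (i). The claim $dR_{P}(0;\zeta_1)=\zeta_1$ then follows from $X(0)=Y$, $Z(0)=I_N$ and the tangent identity $\zeta_1=\zeta_1 P+P\zeta_1$ (with $\zeta_1=\zeta_1^{*}$).

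For (ii), I would differentiate each of the three terms of (i) once more in direction $\zeta_2$. The first term produces two pieces (from $dZ^{-1}$ and $dX^{*}$); the middle term produces five (from $dX$, the two factors $dZ^{-1}$, the derivative of $\zeta_1^{*}\xi+\xi^{*}\zeta_1$, and $dX^{*}$); the last produces two (from $dX$ and $dZ^{-1}$). The middle term $-XZ^{-1}Y^{*}(\zeta_1^{*}\zeta_2+\zeta_2^{*}\zeta_1)YZ^{-1}X^{*}$, coming from differentiating the factor $\xi$, is the one that visibly splits symmetrically and links the $\zeta_1\leftrightarrow\zeta_2$ halves; the remaining eight pieces distribute themselves in an asymmetric fashion that matches precisely the five summands of $\tilde R(\xi;\zeta_1,\zeta_2)$ and the five summands of $\tilde R(\xi;\zeta_2,\zeta_1)$. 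Setting $\xi=0$ kills every term containing $\zeta^{*}\xi+\xi^{*}\zeta$ and turns $X,Z$ into $Y,I_N$; only the four outer terms survive, giving \eqref{myeq2.1.1}. For \eqref{myeq2.1.2} I would substitute the explicit formulas $\eta_{jk}=\tfrac12(YE_{jk}Y_\perp^{*}+Y_\perp E_{jk}^{*}Y^{*})$ and $\hat\eta_{jk}=\tfrac{i}{2}(YE_{jk}Y_\perp^{*}-Y_\perp E_{jk}^{*}Y^{*})$ into \eqref{myeq2.1.1} and use $Y^{*}Y=I_N$, $Y_\perp^{*}Y_\perp=I_{\nu-N}$, $Y^{*}Y_\perp=0$; the two exterior terms produce $\pm\tfrac{i}{4}Y_\perp E_{jk}^{*}E_{jk}Y_\perp^{*}$ and cancel, and the two interior terms produce $\mp\tfrac{i}{4}YE_{jk}E_{jk}^{*}Y^{*}$ and cancel as well.

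For (iii), I would differentiate each of the five terms of $\tilde R(\xi;\zeta_1,\zeta_2)$ once more in direction $\zeta_3$ (and then symmetrize via $(\zeta_1,\zeta_2)\mapsto(\zeta_2,\zeta_1)$). Each factor $X$, $X^{*}$, $Z^{-1}$, or $\xi$ inside a bracket $\zeta_{i}^{*}\xi+\xi^{*}\zeta_{i}$ contributes one piece; collecting them gives the eight lines defining $\hat R(\xi;\zeta_1,\zeta_2,\zeta_3)$. The crucial observation that makes the final answer a sum over all six permutations of $(\zeta_1,\zeta_2,\zeta_3)$ is the following: when I differentiate the factor $\xi$ in one of the brackets $\zeta_{i}^{*}\xi+\xi^{*}\zeta_{i}$ already present in $\tilde R$, the resulting piece looks like a product in which $\zeta_{i}$ and $\zeta_{3}$ play symmetric roles; together with the two starting orders of $(\zeta_1,\zeta_2)$, this exhausts the six orderings. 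I would verify the match termwise against the eight lines of $\hat R$.

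The main obstacle is purely combinatorial bookkeeping for part (iii): the second derivative has ten summands with four distinct $\xi$-dependent factors each, so the third derivative has on the order of forty pieces that must be sorted into the six permutations of the three $\hat R$-blocks. No new idea is needed; the only care points are the sign from $d(Z^{-1})=-Z^{-1}(dZ)Z^{-1}$ and the fact that each bracket $\zeta_{i}^{*}\xi+\xi^{*}\zeta_{i}$ is self-adjoint, so I can freely take adjoints of partial products when matching terms.
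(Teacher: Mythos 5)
Your proposal is correct and takes essentially the same route as the paper, which simply states that all formulas except \eqref{myeq2.1.2} follow from direct calculation, exactly the Leibniz/chain-rule bookkeeping you describe on $R_P(\xi)=XZ^{-1}X^*$. The one place the paper supplies an actual argument is \eqref{myeq2.1.2}, and there your method differs slightly: you substitute the explicit expressions $\eta_{jk}=\frac12(YE_{jk}Y_\perp^*+Y_\perp E_{jk}^*Y^*)$ and $\hat\eta_{jk}=\frac{i}{2}(YE_{jk}Y_\perp^*-Y_\perp E_{jk}^*Y^*)$ into \eqref{myeq2.1.1} and verify the pairwise cancellations directly (which indeed works, using $aY=0$, $Y^*b=0$, $a^2=b^2=0$ with $a:=YE_{jk}Y_\perp^*$, $b:=a^*$), whereas the paper splits $\zeta=\frac12(\zeta^a+\zeta^b)$ with $\zeta^a:=YBY_\perp^*$, $\zeta^b:=Y_\perp B^*Y^*$, observes that $d^kR_P$ depends only on the $\zeta^b$ parts (since $\zeta$ enters only as $\zeta Y$ or $Y^*\zeta^*$), and uses $\hat\eta_{jk}^b=-i\eta_{jk}^b$ to get the vanishing in one line. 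The paper's trick is more economical and generalizes to all orders $k$ at once, which is reused later in Lemma \ref{sumi}; your direct substitution is more elementary and equally conclusive for the second-order case at hand.
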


\begin{proof}
Except for \eqref{myeq2.1.2}, all formulas are obtained by direct calculations. For the proof of \eqref{myeq2.1.2} we define $\zeta^a:=YBY^*_{\perp}$ and $\zeta^b:=Y_{\perp}B^*Y^*$ for $\zeta=\mathrm{sym}(YBY^*_{\perp})\in T_PGr(N,\nu)$. Then we obviously have $\zeta=\frac{1}{2}(\zeta^a+\zeta^b)$. Noting that $Y^*Y_{\perp}=Y_{\perp}^*Y=0$ and $\zeta_j$ always appears in the form $\zeta Y$ or $Y^*\zeta^*$ in the Fr\'echet derivative $d^kR_P(\xi;\zeta_1,\dots,\zeta_k)$, we can see that if at least one of $\zeta_j$ in $d^kR_P(\xi;\zeta_1,\dots,\zeta_k)$ is replaced by $\zeta_j^a$ as $d^kR_P(\xi;\zeta_1,\dots,\zeta_j^a,\dots,\zeta_k)$, the Fr\'echet derivative vanishes. Thus we obtain
$$d^kR_P(\xi;\zeta_1,\dots,\zeta_k)=2^{-k}d^kR_P(\xi;\zeta_1^b,\dots,\zeta_k^b).$$
Hence noting also that $\hat\eta_{jk}^b=-i\eta_{jk}^b$, $(\zeta^b)^*=\zeta^a$ and $\zeta^bP=\zeta^b$ we have by \eqref{myeq2.1.1}
$$d^2R_P(0;\eta_{jk},\hat\eta_{jk})=\frac{1}{4}d^2R_P(0;\eta_{jk}^b,\hat\eta_{jk}^b)=\frac{1}{4}(i\eta_{jk}^b\eta_{jk}^a-i\eta_{jk}^b\eta_{jk}^a+i\eta_{jk}^a\eta_{jk}^b-i\eta_{jk}^a\eta_{jk}^b)=0.$$
\end{proof}

The following proposition is a consequence of the form of derivatives of $R_P(\xi)$.
\begin{proposition}
$dR_P(\xi):T_PGr(N,\nu)\to T_PGr(N,\nu)$ is an isomorphism for any $\xi\in T_PGr(N,\nu)$.
\end{proposition}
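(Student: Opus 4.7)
Since $T_P Gr(N,\nu)$ has finite real dimension $2N(\nu-N)$, which matches $\dim_{\mathbb R} T_{R_P(\xi)} Gr(N,\nu)$, it suffices to show that the linear map $dR_P(\xi)$ has trivial kernel. My plan is to do this by a direct computation: multiply $dR_P(\xi;\zeta_1)$ on the right by the natural "test matrix'' $X=(I_\nu+\xi)Y$ and on the left by $Y_\perp^*$, and read off the parameter of $\zeta_1$ from the result.

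Concretely, I would use the basis in \eqref{myeq2.1} to parametrize $\xi = \mathrm{sym}(YBY^*_\perp)$ and $\zeta_1 = \mathrm{sym}(Y(\delta B)Y^*_\perp)$ by unique matrices $B,\delta B\in\mathbb C^{N\times(\nu-N)}$. Setting $\tilde B := \tfrac{1}{2}B^*$ and $\delta\tilde B := \tfrac{1}{2}(\delta B)^*$, a short verification gives $\xi Y = Y_\perp\tilde B$, $\zeta_1 Y = Y_\perp\delta\tilde B$, $X = Y + Y_\perp\tilde B$, and $X^*X = Z$. Multiplying the formula of Proposition~\ref{derivatives}(i) on the right by $X$ and using $Z^{-1}X^*X = I_N$ together with the vanishing relations $Y^*\xi Y = Y^*\xi^* Y = Y^*\zeta_1 Y = Y^*\zeta_1^* Y = 0$ (which hold because $\xi,\zeta_1\in T_P Gr(N,\nu)$), the $Z^{-1}X^*$ factors collapse and two of the three terms cancel, leaving
\begin{equation*}
dR_P(\xi;\zeta_1)\, X = \zeta_1 Y - XZ^{-1}Y^*\xi^*\zeta_1 Y.
\end{equation*}

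Multiplying further by $Y_\perp^*$ on the left kills the $Y$-component of $X$, and substituting $Y_\perp^*\zeta_1 Y = \delta\tilde B$, $Y_\perp^* X = \tilde B$, and $Y^*\xi^*\zeta_1 Y = \tilde B^*\delta\tilde B$ yields
\begin{equation*}
Y_\perp^*\, dR_P(\xi;\zeta_1)\, X = \bigl[I_{\nu-N} - \tilde B(I_N + \tilde B^*\tilde B)^{-1}\tilde B^*\bigr]\delta\tilde B = (I_{\nu-N} + \tilde B\tilde B^*)^{-1}\delta\tilde B,
\end{equation*}
where the last equality is the standard Woodbury identity. Since $(I_{\nu-N} + \tilde B\tilde B^*)^{-1}$ is invertible, the assumption $dR_P(\xi;\zeta_1)=0$ forces $\delta\tilde B = 0$, hence $\delta B = 0$, hence $\zeta_1 = 0$. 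Combined with the dimension count above, this gives the claimed isomorphism.

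The main obstacle is the algebraic bookkeeping in the two simplifications: one must keep track of which terms vanish via $Y^*(\cdot)Y = 0$ on $T_P Gr(N,\nu)$ and which survive, so that after canceling the $Z^{-1}$ factors a legible expression in $\delta\tilde B$ emerges. The Woodbury identity is the only nontrivial linear-algebraic input, and the choice of multiplying by $X$ on the right (together with $X^*X = Z$) is what makes the final formula transparent.
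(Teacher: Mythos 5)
Your proof is correct, and it takes a genuinely different route from the paper's. The paper starts from the hypothesis $dR_P(\xi;\zeta)=0$ and iteratively multiplies by $P$ from the left (and then by $\xi$), peeling away one term at a time until it arrives at $\zeta YZ^{-1}X^*=0$, then $\zeta YZ^{-1}Y^*=0$, then $\zeta Y=0$, and finally $B=0$; it is a careful but somewhat ad hoc cancellation argument. You instead multiply the derivative formula by $X$ on the right and $Y_\perp^*$ on the left to obtain a closed-form expression
$$
Y_\perp^*\,dR_P(\xi;\zeta_1)\,X=(I_{\nu-N}+\tilde B\tilde B^*)^{-1}\,\delta\tilde B,
$$
where the Woodbury/push-through identity exhibits an explicitly invertible operator acting on the coordinate $\delta\tilde B$ of $\zeta_1$. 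This buys you more: it not only proves injectivity but isolates the structure of the derivative (a single symmetric positive-definite factor), and shows at a glance that $\zeta_1\mapsto Y_\perp^*\,dR_P(\xi;\zeta_1)\,X$ is already a bijection onto $\mathbb C^{(\nu-N)\times N}$. The algebraic steps you outline are all correct; the collapse $Z^{-1}X^*X=I_N$ and the vanishing of $Y^*\zeta_1^*Y$ are exactly what eliminate the extra terms, and the Woodbury identity $(I+\tilde B\tilde B^*)^{-1}=I-\tilde B(I+\tilde B^*\tilde B)^{-1}\tilde B^*$ is the only nonelementary input. The only cosmetic quibble is the phrase \emph{two of the three terms cancel}; more precisely, one part of the middle term is cancelled by the third term after substituting $X=Y+\xi Y$, leaving the $-XZ^{-1}Y^*\xi^*\zeta_1Y$ remainder.
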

\begin{proof}
We have only to prove that $dR_P(\xi)$ is injective. Assume that
\begin{equation}\label{myeq2.1.2.1}
dR_P(\xi;\zeta)=\zeta YZ^{-1}X^*-XZ^{-1}Y^*(\zeta^*\xi+\xi^*\zeta)YZ^{-1}X^*+XZ^{-1}Y^*\zeta^*=0,
\end{equation}
for $\zeta\in T_PGr(N,\nu)$. Recall that $\zeta$ is written as $\zeta=\mathrm{sym}(YBY_{\perp}^*)$ for some $B \in \mathbb C^{N\times(\nu-N)}$. From the form \eqref{myeq2.1} of the basis of $T_PGr(N,\nu)$ it follows that $P\tilde\xi Y=0$ for any $\tilde\xi\in T_PGr(N,\nu)$. Since $X= Y+\xi Y$, multiplying \eqref{myeq2.1.2.1} by $P$ from the left and using also $PY=Y$ we obtain
$$-YZ^{-1}Y^*(\zeta^*\xi+\xi^*\zeta)YZ^{-1}X^*+YZ^{-1}Y^*\zeta^*=0.$$
This equation yields
$$-\xi YZ^{-1}Y^*(\zeta^*\xi+\xi^*\zeta)YZ^{-1}X^*+\xi YZ^{-1}Y^*\zeta^*=0.$$
It follows form these two equations that
$$-XZ^{-1}Y^*(\zeta^*\xi+\xi^*\zeta)YZ^{-1}X^*+XZ^{-1}Y^*\zeta^*=0.$$
Combining this equation and \eqref{myeq2.1.2.1} we obtain $\zeta YZ^{-1}X^*=0$. Since $X^*=Y^*+Y^*\xi^*$, by a similar argument as above it follows from $\zeta YZ^{-1}X^*=0$ that $\zeta YZ^{-1}Y^*=0$. Multiplying the equation by $YZ$ from the right we have $\zeta Y=0$, which means that $Y_{\perp} B^*=0$. This can hold only if $B=0$, that is $\zeta=0$, which implies that $dR_P(\xi)$ is injective and completes the proof.
\end{proof}

\begin{remark}
Since $Gr(N,\nu)$ is not homeomorphic to a linear space, $R_P(\xi):T_PGr(N,\nu)\to Gr(N,\nu)$ is not surjective. Actually, if we write $Y=(\bm y_1\dotsm \bm y_N)$ by orthonormal vectors $\bm y_1,\dots,\bm y_N$, $\mathrm{Ran}\, R_P(\xi)$ does not include any vector $\bm c\neq 0$ such that $\bm c\in \mathcal L(\bm y_1,\dots,\bm y_N)^{\perp}$.
\end{remark}

Next we consider derivatives of functions on $Gr(N,\nu)$. We denote the tangent space $T_PGr(N,\nu)$ by $\mathcal X=\mathcal X_P$. We can identify $\mathcal X$ with $\mathbb C^{N\times(\nu-N)}$ by $\mathcal X\ni\xi\mapsto B\in \mathbb C^{N\times(\nu-N)}$, where the correspondence is given by $\xi=\mathrm{sym}(YBY^*_{\perp})$. We introduce a norm in $\mathcal X$ by $\lVert \xi\rVert_{\mathcal X}:=\lVert B\rVert_{1,\infty}$ for $\xi=\mathrm{sym}(YBY^*_{\perp})$. From a function $\tilde f(P):Gr(N,\nu)\to\mathbb R$ and a point $P\in Gr(N,\nu)$ we obtain a function $f(\xi):=\tilde f(R_P(\xi)):\mathcal X\to\mathbb R$. By the definition of the Fr\'echet derivative for a function $f(\xi):\mathcal X\to\mathbb R$, we can see that $f'(\xi)$ is an element in the dual space $\mathcal X'$ of $\mathcal X$ whose components are given by $df(\xi;\eta_{jk})$ and $df(\xi;\hat\eta_{jk})$, $1\leq j\leq N,\ N+1\leq k\leq \nu$. We can identify $f'(\xi)\in\mathcal X'$ with $\tilde B=(\tilde b_{jk})\in\mathbb C^{N\times(\nu-N)}\simeq \mathbb R^{2(N\times(\nu-N))}\simeq\mathcal X$ by $\tilde b_{jk}:=df(\xi;\eta_{jk})+df(\xi;\hat\eta_{jk})i$. Thus the norm of $f'(\xi)$ is given by
\begin{equation}\label{myeq2.1.3}
\begin{split}
\lVert f'(\xi)\rVert_{\mathcal X'}=\lVert f'(\xi)\rVert_{\mathcal X}=\lVert \tilde B\rVert_{1,\infty}=\max\bigg\{&\max_j\sum_k\sqrt{(df(\xi;\eta_{jk}))^2+(df(\xi;\hat\eta_{jk}))^2},\\
&\max_k\sum_j\sqrt{(df(\xi;\eta_{jk}))^2+(df(\xi;\hat\eta_{jk}))^2}\bigg\}.
\end{split}
\end{equation}
Set $F(\xi):=f'(\xi)$. We can relabel the components $\mathrm{Re}\, F(\xi)_{jk},\ \mathrm{Im}\, F(\xi)_{jk}$ by $\mu=1,\dots, 2N(\nu-N)$ as $F(\xi)_{\mu}$. Moreover, if we suppose the labeling is executed so that $\mathrm{Re}\, F(\xi)_{jk}$ and $\mathrm{Im}\, F(\xi)_{jk},\ 1\leq j\leq N, N+1\leq k\leq \nu$ correspond to $F(\xi)_{2q-1}$ and $F(\xi)_{2q}$ respectively for some $1\leq q\leq N(\nu-N)$, we have 
\begin{equation*}
\begin{split}
F(\xi)_{2q-1}&=df(\xi;\eta_{jk}),\\
F(\xi)_{2q}&=df(\xi;\hat\eta_{jk}).
\end{split}
\end{equation*}
By definition $F'(\xi)$ is a real linear mapping from $\mathcal X\simeq\mathbb R^{2(N\times(\nu-N))}$ to itself. 

\begin{lem}\label{sumi}
For a function $f(P)$ of $P\in Gr(N,\nu)$ that has $m$th order derivative for $m\in\mathbb N$, a point $P'\in Gr(N,\nu)$, $\xi,\zeta_1,\dots,\zeta_m\in \mathcal X$ and $1\leq j\leq N,\ N+1\leq k\leq \nu$ we have
\begin{equation}\label{myeq2.4}
\begin{split}
&d^m(f(R_{P'}(\cdot)))(\xi;\zeta_1,\dots,\eta_{jk},\dots,\zeta_m)+d^m(f(R_{P'}(\cdot)))(\xi;\zeta_1,\dots,\hat\eta_{jk},\dots,\zeta_m)i\\
&\quad=NA_{\eta_{jk}^b}\{d^m(f(R_{P'}(\cdot)))(\xi;\zeta_1,\dots,\eta_{jk}^b,\dots,\zeta_m)\},
\end{split}
\end{equation}
where $d^m(f(R_{P'}(\cdot)))(\xi;\zeta_1,\dots,\zeta_m)$ is the $m$th Fr\'echet derivative of $f(R_{P'}(\tilde \xi))$ at $\xi$ in the direction $(\zeta_1,\dots,\zeta_m)$, $\eta_{jk}^b$ is the matrix defined in the proof of Proposition \ref{derivatives}, and $NA_{\tilde\zeta}\{d^m(f(R_{P'}(\cdot)))(\xi;\zeta_1,\dots,\tilde\zeta,\dots,\zeta_m)\}$ means that terms in $d^m(f(R_{P'}(\cdot)))(\xi;\newline \zeta_1,\dots,\tilde\zeta,\dots,\zeta_m)$ including $\tilde \zeta^*$ are replaced by $0$. In \eqref{myeq2.4} we assume that $\eta_{jk},\hat\eta_{jk}$ and $\eta_{jk}^b$ appear in the same place of $d^m(f(R_{P'}(\cdot)))(\xi;\zeta_1,\dots,\zeta_m)$.
\end{lem}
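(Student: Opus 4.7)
Setting $g(\xi):=f(R_{P'}(\xi))$, my plan exploits two structural features: real-multilinearity of $d^m g$ in each slot and a vanishing phenomenon inherited from the proof of Proposition \ref{derivatives}. Writing $\eta_{jk}^a:=YE_{jk}Y^*_\perp$ and $\eta_{jk}^b:=Y_\perp E_{jk}^*Y^*$, I decompose
$$\eta_{jk}=\tfrac{1}{2}(\eta_{jk}^a+\eta_{jk}^b),\qquad \hat\eta_{jk}=\tfrac{i}{2}(\eta_{jk}^a-\eta_{jk}^b),$$
so that the identity to prove reduces to a statement about inserting $\eta_{jk}^b$ alone.

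The first step is to show that $d^m g(\xi;\dots,c\eta_{jk}^a,\dots)=0$ for every $c\in\mathbb C$ and every placement of the special slot. By Fa\`a di Bruno, $d^m g(\xi;\zeta_1,\dots,\zeta_m)$ expands as a sum of products, each of the form $d^rf(R_{P'}(\xi))$ multiplied by factors $d^{k_i}R_{P'}(\xi;\cdot)$ whose slots partition $\{\zeta_1,\dots,\zeta_m\}$; the special argument lies inside exactly one such $d^{k_i}R_{P'}$. In the explicit formulas of Proposition \ref{derivatives} each $\zeta_i$-slot appears only as $\zeta_iY$ or $Y^*\zeta_i^*$ (a pattern that persists at all orders by induction from $R_P(\xi)=(I_\nu+\xi)Y[I_N+Y^*\xi^*\xi Y]^{-1}Y^*(I_\nu+\xi^*)$), and both expressions vanish under $\zeta_i\mapsto c\eta_{jk}^a$ because $\eta_{jk}^aY=0$ and $Y^*(\eta_{jk}^a)^*=Y^*\eta_{jk}^b=0$. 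Real linearity in the special slot then yields
$$d^m g(\xi;\dots,\eta_{jk},\dots)=\tfrac{1}{2}d^m g(\xi;\dots,\eta_{jk}^b,\dots),\qquad d^m g(\xi;\dots,\hat\eta_{jk},\dots)=-\tfrac{1}{2}d^m g(\xi;\dots,i\eta_{jk}^b,\dots).$$

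For the final step, since each summand of $d^m g(\xi;\dots,\zeta_*,\dots)$ carries the special argument either as $\zeta_*$ (from $\zeta Y$ occurrences) or as $\zeta_*^*$ (from $Y^*\zeta^*$ occurrences), I split $d^m g(\xi;\dots,\zeta_*,\dots)=A_H(\zeta_*)+A_{AH}(\zeta_*)$ into a $\mathbb C$-linear and a $\mathbb C$-antilinear piece in $\zeta_*$, so that by construction $A_H(\eta_{jk}^b)=NA_{\eta_{jk}^b}\{d^m g(\xi;\dots,\eta_{jk}^b,\dots)\}$. Using $A_H(i\eta_{jk}^b)=iA_H(\eta_{jk}^b)$ and $A_{AH}(i\eta_{jk}^b)=-iA_{AH}(\eta_{jk}^b)$, an elementary computation gives
$$\tfrac{1}{2}\bigl(A_H(\eta_{jk}^b)+A_{AH}(\eta_{jk}^b)\bigr)+\Bigl(-\tfrac{i}{2}A_H(\eta_{jk}^b)+\tfrac{i}{2}A_{AH}(\eta_{jk}^b)\Bigr)i=A_H(\eta_{jk}^b),$$
which is exactly the desired identity. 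The only delicate point is the inductive verification that the $\zeta_iY$/$Y^*\zeta_i^*$ pattern survives through every order of differentiation; once that structural claim is in hand, both the vanishing of $d^m g$ on the $c\eta_{jk}^a$-piece and the holomorphic/anti-holomorphic splitting follow immediately, and the rest is linear-algebraic bookkeeping.
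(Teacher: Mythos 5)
Your proof is correct and rests on the same structural observation as the paper's: that in every term of $d^m(f(R_{P'}(\cdot)))$ the special slot enters only through $\tilde\zeta Y$ or $Y^*\tilde\zeta^*$, together with the identities $\eta_{jk}+i\hat\eta_{jk}=\eta_{jk}^b$ and $Y^*\eta_{jk}^b=0$. The paper reaches the conclusion a bit more directly by summing the two insertions with the factor $i$ and letting $Y^*\eta_{jk}^b=0$ annihilate the antiholomorphic contributions, whereas you first dispose of the $\eta_{jk}^a$ part and then invoke a $\mathbb C$-linear/$\mathbb C$-antilinear splitting — the same argument, presented with slightly more bookkeeping.
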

\begin{proof}
From the form of derivatives of $R_{P'}(\tilde\xi)$ we can see that the derivative
$$d^m(f(R_{\tilde P}(\cdot)))(\xi;\zeta_1,\dots,\tilde\zeta,\dots,\zeta_m)$$
contains terms linear with respect to $\tilde\zeta Y$ or $Y^*\tilde\zeta^*$. It follows from $\eta_{jk}^*=\eta_{jk}$, $\hat\eta_{jk}^*=\hat\eta_{jk}$, $\eta_{jk}+i\hat\eta_{jk}=\eta_{jk}^b$ and $Y^*\eta_{jk}^b=0$, that the sum of a term containing $Y^*\eta_{jk}^*$ and the corresponding term containing $iY^*\hat\eta_{jk}^*$ vanishes. On the other hand, the sum of a term containing $\eta_{jk}Y$ and the corresponding term containing $i\hat\eta_{jk}Y$ becomes a term containing $\eta_{jk}^bY$, which means the result.
\end{proof}

\section{proof of the main theorem}\label{thirdsec}
Our method is based on Newton's method. More precisely, we use the Kantorovi\v c semi-local convergence theorem which shows convergence of a sequence and existence of a solution without assuming the existence in advance. In Newton's method we consider an equation $F(\xi)=0$, where $F: \mathcal X \supset\mathcal D\to \mathcal X$ is a mapping from an open convex set $\mathcal D$ in a Banach space $\mathcal X$, and construct a sequence $\{\xi_m\}$ inductively by
$$\xi_{m+1}=\xi_m-F'(\xi_m)^{-1}F(\xi_m).$$
The following is the Kantorovi\v c semi-local convergence theorem.

\begin{proposition}[see e.g. \cite{Ze}]\label{Kant}
Suppose that:
\begin{itemize}
\item[(i)] The mapping $F$ is Fr\'echet differentiable on $\mathcal D$, and the derivative is Lipschitz continuous, i.e. there exists a constant $L>0$ such that
$$\lVert F'(\xi)-F'(\tilde\xi)\rVert\leq L\lVert \xi-\tilde\xi\rVert\quad \mathrm{for\ all}\ \xi,\tilde\xi\in \mathcal D.$$
\item[(ii)] For a fixed starting point $\xi_0\in\mathcal D$, the inverse $F'(\xi_0)^{-1}$ exists as a continuous linear operator on $\mathcal X$. The real numbers $c$ and $g$ are chosen so that
$$c\geq \lVert F'(\xi_0)^{-1}\rVert,\quad g\geq \lVert F'(\xi_0)^{-1}F(\xi_0)\rVert,$$
and $cgL<1/2$. Furthermore, we set $\theta:=cgL$, $\tau_*:=\frac{1-(1-2\theta)^{1/2}}{cL}$, $\tau_{**}:=\frac{1+(1-2\theta)^{1/2}}{cL}$, and $r:=\tau_*-g$.
\item[(iii)] The first approximation $\xi_1:=\xi_0-F'(\xi_0)^{-1}F(\xi_0)$ has the property that the closed ball $\bar U(\xi_1;r):=\{\zeta:\lVert\zeta-\xi_1\rVert\leq r\}$ lies within the domain of definition $\mathcal D$.
\end{itemize}
Then the equation $F(\xi)=0$ has a solution $\xi_*\in \bar U(\xi_1;r)$ and this solution is unique on $\bar U(\xi_0;\tau_{**})\cap\mathcal D$, i.e. on a suitable neighborhood of the initial point $\xi_0$.
\end{proposition}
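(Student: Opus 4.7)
The plan is to give Kantorovič's classical majorant proof. I would first introduce the scalar polynomial
$$p(\tau) := \frac{L}{2}\tau^2 - \frac{\tau}{c} + \frac{g}{c},$$
whose roots under the hypothesis $cgL < 1/2$ are precisely $\tau_*$ and $\tau_{**}$, and run scalar Newton on $p$ starting from $\tau_0 := 0$: the recursion $\tau_{m+1} = \tau_m - p(\tau_m)/p'(\tau_m)$ yields $\tau_1 = g$ and a strictly increasing sequence converging monotonically from below to $\tau_*$. The quadratic form of $p$ gives the explicit identity $p(\tau_{m+1}) = \frac{L}{2}(\tau_{m+1} - \tau_m)^2$, which is the scalar mirror of Newton's quadratic convergence and will control the step sizes of $\{\xi_m\}$.

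The heart of the argument is a simultaneous induction on $m$ establishing (a) $\xi_m \in \bar U(\xi_0;\tau_m) \subset \mathcal D$, (b) $F'(\xi_m)$ is invertible with $\lVert F'(\xi_m)^{-1}\rVert \leq -1/p'(\tau_m)$, and (c) $\lVert \xi_{m+1} - \xi_m\rVert \leq \tau_{m+1} - \tau_m$. For (b) I would factor
$$F'(\xi_m) = F'(\xi_0)\bigl(I + F'(\xi_0)^{-1}(F'(\xi_m) - F'(\xi_0))\bigr),$$
invoke the Lipschitz hypothesis together with the bound $\lVert \xi_m - \xi_0\rVert \leq \tau_m < \tau_* \leq 1/(cL)$ inherited from (a), and apply the Banach perturbation lemma to invert. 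For (c) I would use Taylor's remainder,
$$F(\xi_{m+1}) = \int_0^1 \bigl(F'(\xi_m + t(\xi_{m+1} - \xi_m)) - F'(\xi_m)\bigr)(\xi_{m+1} - \xi_m)\, dt,$$
whose norm is bounded by $\frac{L}{2}\lVert \xi_{m+1} - \xi_m\rVert^2$, combine with the norm bound in (b) at step $m+1$, and match against the scalar identity $p(\tau_{m+1}) = \frac{L}{2}(\tau_{m+1} - \tau_m)^2$ to close the induction.

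From (c), $\{\xi_m\}$ is Cauchy since $\sum(\tau_{m+1} - \tau_m) = \tau_* < \infty$, and its limit $\xi_*$ satisfies $\lVert \xi_* - \xi_1\rVert \leq \tau_* - g = r$; continuity of $F$ together with (b) yields $F(\xi_*) = 0$. For uniqueness on $\bar U(\xi_0;\tau_{**}) \cap \mathcal D$, given another zero $\tilde\xi_*$ I would consider
$$0 = F(\tilde\xi_*) - F(\xi_*) = \left(\int_0^1 F'(\xi_* + t(\tilde\xi_* - \xi_*))\, dt\right)(\tilde\xi_* - \xi_*),$$
and show invertibility of the integrated operator via another Banach argument. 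Because the raw Lipschitz bound gives only $\lVert F'(\xi_0)^{-1}(\cdot - F'(\xi_0))\rVert \leq 1$ at the borderline $\lVert \xi_* - \xi_0\rVert + \lVert \tilde\xi_* - \xi_0\rVert = \tau_* + \tau_{**} = 2/(cL)$, I expect this uniqueness step to be the main obstacle: strict inequality has to be extracted either from the observation that $\xi_* \in \bar U(\xi_0;\tau_*)$ lies in the strictly smaller ball or from an auxiliary majorant argument running a second scalar Newton sequence from $\tau_{**}$ and showing it coincides with the first in the limit. The remaining bookkeeping difficulty is orchestrating the induction so that each of (a)--(c) at step $m$ is sharp enough to reproduce itself at step $m+1$, and verifying the sign of $p'$ and the monotonicity of $\{\tau_m\}$ on $[0,\tau_*]$.
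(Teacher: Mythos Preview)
The paper does not prove this proposition at all: it is stated as the classical Kantorovi\v c semi-local convergence theorem with a reference to Zeidler's textbook (\cite{Ze}), and then simply applied as a black box in the proof of Theorem~\ref{mainthm} via Lemma~\ref{mainlem}. There is therefore no ``paper's own proof'' to compare against.

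That said, your outline is the standard majorant-function proof one finds in Zeidler and elsewhere, and the key ingredients are all in place: the scalar majorant $p(\tau)=\tfrac{L}{2}\tau^2-\tfrac{\tau}{c}+\tfrac{g}{c}$ with roots $\tau_*,\tau_{**}$, scalar Newton starting from $\tau_0=0$ giving $\tau_1=g$, the quadratic identity $p(\tau_{m+1})=\tfrac{L}{2}(\tau_{m+1}-\tau_m)^2$, the simultaneous induction on (a)--(c) with the Banach perturbation lemma for invertibility of $F'(\xi_m)$, and the Taylor remainder for the step bound. Your computation $-1/p'(\tau_m)=c/(1-cL\tau_m)$ matches the perturbation bound exactly, and the Cauchy argument locating $\xi_*\in\bar U(\xi_1;r)$ is correct. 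You are also right that the uniqueness step is where care is required; the cleanest route is the one you hint at last, namely to show inductively that $\lVert\tilde\xi_*-\xi_m\rVert\leq\tau_{**}-\tau_m$ (or run the auxiliary majorant sequence), which forces $\tilde\xi_*=\xi_*$ in the limit without relying on a borderline Banach argument. Nothing in your sketch is wrong; it would simply be a reproduction of the textbook proof the paper chose to cite rather than include.
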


Let $P^0$ be a diagonal matrix
$$P^0=\mathrm{diag}\, (1,\dots,1,0,\dots,0),$$
where the first $N$ diagonal elements are $1$. We apply Newton's method to $F(\xi):=(\mathcal E(R_{P^0}(\xi)))'$ with $\mathcal X=\mathcal X_{P^0}$, $\xi_0=0$ and $\mathcal D$ sufficiently close to $\xi_0$ with respect to the norm $\lVert\cdot\rVert_{\mathcal X}$. Under these settings we have the following lemma.

\begin{lem}\label{mainlem}
Under the assumptions of Theorem \ref{mainthm} the followings are true with the constants $c_*$ and $L_{\hat\epsilon}$ defined in Theorem \ref{mainthm}.
\begin{itemize}
\item[(1)] $$\lVert F(0)\rVert_{\mathcal X}\leq \epsilon,$$
\item[(2)] $$\lVert F'(0)^{-1}\rVert_{\mathcal L(\mathcal X)}\leq c_*,$$
\item[(3)] $$\lVert F'(\xi)-F'(\tilde\xi)\rVert_{\mathcal L(\mathcal X)}\leq L_{\hat\epsilon}\lVert \xi-\tilde\xi\rVert_{\mathcal X},$$
for $\xi,\tilde\xi\in \bar U(0;\hat\epsilon)$.
\end{itemize}
\end{lem}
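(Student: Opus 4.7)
The plan is to apply the chain rule to $F(\xi) = (\mathcal E \circ R_{P^0})'(\xi)$, insert the explicit derivatives of $R_{P^0}$ supplied by Proposition \ref{derivatives}, and bound the resulting Fock-matrix and two-electron integral expressions via (LMO), (OI) and (NI). A helpful simplification throughout is that $\mathcal E(P)$ is a polynomial of degree at most $2$ in $P$, so $d^k\mathcal E\equiv 0$ for $k\geq 3$; and Lemma \ref{sumi} lets us repackage each pair $\{\eta_{jk},\hat\eta_{jk}\}$ into the single direction $\eta_{jk}^b$, collapsing the real/imaginary bookkeeping into a clean matrix computation.

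For (1), since $dR_{P^0}(0;\cdot)=\mathrm{id}$, I have $df(0;\eta_{jk}) = d\mathcal E(P^0;\eta_{jk})$ and analogously for $\hat\eta_{jk}$. Lemma \ref{sumi} with $m=1$ reduces the complex quantity $df(0;\eta_{jk})+i\,df(0;\hat\eta_{jk})$ to $NA_{\eta_{jk}^b}\{d\mathcal E(P^0;\eta_{jk}^b)\}$, which upon unfolding the definition of $\mathcal E$ identifies with $\langle\phi_k,\mathcal F(\Phi^0)\phi_j\rangle$ up to a fixed constant factor. Plugging this into the norm formula \eqref{myeq2.1.3} and invoking (OI)(i) for both the row and the column maxima gives $\lVert F(0)\rVert_{\mathcal X}\leq\epsilon$.

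For (2), the chain rule yields
\[F'(0)[\zeta_1,\zeta_2] = d^2\mathcal E(P^0;\zeta_1,\zeta_2)+d\mathcal E(P^0;d^2 R_{P^0}(0;\zeta_1,\zeta_2)),\]
with $d^2 R_{P^0}(0;\cdot,\cdot)$ made explicit by \eqref{myeq2.1.1}. Evaluated on the basis $\{\eta_{jk},\hat\eta_{jk}\}$, the mixed pair $(\eta_{jk},\hat\eta_{jk})$ vanishes by \eqref{myeq2.1.2}, and the diagonal entry of $F'(0)$ indexed by $(j,k)$ equals, after unfolding, $\tfrac12\bigl(\langle\phi_k,\mathcal F(\Phi^0)\phi_k\rangle-\langle\phi_j,\mathcal F(\Phi^0)\phi_j\rangle-\langle kj||kj\rangle\bigr)$, hence is $>\gamma/2$ by (OI)(iii). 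The off-diagonal entries split into a Fock-matrix contribution whose row and column sums are controlled by $\delta$ thanks to (OI)(ii), and a pure two-electron contribution whose sums are controlled by $2\tilde\epsilon$ thanks to (LMO)(i)-(ii) combined with (W)(i). A Gershgorin/Neumann-series argument (writing $F'(0)=D+E$ with $\lVert D^{-1}\rVert_{1,\infty}\leq 2/\gamma$ and $\lVert D^{-1}E\rVert_{1,\infty}\leq (\delta+2\tilde\epsilon)/(\gamma/2)<1$) then produces $\lVert F'(0)^{-1}\rVert_{\mathcal L(\mathcal X)}\leq(\gamma/2-\delta-2\tilde\epsilon)^{-1}=c_*$.

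Part (3) is the principal obstacle. Since $d^3\mathcal E=0$, the chain rule leaves
\[F''(\xi)[\zeta_1,\zeta_2,\zeta_3] = d\mathcal E\bigl(R_{P^0}(\xi);d^3 R_{P^0}(\xi;\zeta_1,\zeta_2,\zeta_3)\bigr)+\sum_{a=1}^3 d^2\mathcal E\bigl(R_{P^0}(\xi);dR_{P^0}(\xi;\zeta_a),\,d^2 R_{P^0}(\xi;\zeta_b,\zeta_c)\bigr),\]
where in each summand $\{a,b,c\}=\{1,2,3\}$; the first term contributes $C_{\hat\epsilon}$ and each of the three others contributes $D_{\hat\epsilon}$ to $\lVert F''(\xi)\rVert\leq C_{\hat\epsilon}+3D_{\hat\epsilon}=L_{\hat\epsilon}$, after which (3) follows by integrating along the segment from $\tilde\xi$ to $\xi$. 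On $\bar U(0;\hat\epsilon)$ each factor $Z^{-1}=(I_N+Y^*\xi^*\xi Y)^{-1}$ is bounded by $(1-\hat\epsilon^2)^{-1}$ (from $\lVert Y^*\xi^*\xi Y\rVert\leq\hat\epsilon^2$) and each factor $X=(I_\nu+\xi)Y$ by $1+\hat\epsilon$, which explains the rational functions of $\hat\epsilon$ in $C_{\hat\epsilon}$ and $D_{\hat\epsilon}$. The hard part will be organizing the many summands appearing in $d^3 R_{P^0}(\xi;\cdot)$ from Proposition \ref{derivatives}(iii) and matching them against the integrals $[jk|lm]$ and nuclear integrals arising from $d\mathcal E$ and $d^2\mathcal E$, so that every sum over intermediate indices closes into the size-independent constants $\tilde C+\hat C+\breve C+\check C$; this is where the weight axiom (W)(ii), $w_{jk}^{-1}w_{kl}^{-1}\leq w_{jl}^{-1}$, is essential, since it allows chains of two-body integrals to telescope without accumulating $\nu$-dependent factors.
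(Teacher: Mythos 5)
Your proposal follows essentially the same route as the paper: the chain rule with the explicit retraction derivatives from Proposition \ref{derivatives}, a block-diagonal/off-diagonal splitting plus a Neumann series for (2), and the four-term expansion of $d^3(\mathcal E\circ R_{P^0})$ (using $d^3\mathcal E=0$) estimated via Lemma \ref{intnorm} for (3), yielding $L_{\hat\epsilon}=C_{\hat\epsilon}+3D_{\hat\epsilon}$ with contributions $\tilde C+\hat C+\breve C+\check C$. The one inaccuracy is your attribution of (W)(ii) as the essential axiom in part (3): the nested sums in Lemma \ref{intnorm} close using (W)(i) together with (LMO)(ii)/(iv) in a carefully chosen summation order, whereas (W)(ii) is invoked only in Section \ref{fifthsec} to make the weighted norm $\lVert\cdot\rVert_{w,1,\infty}$ submultiplicative in the orthogonalization step.
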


\begin{proof}[proof of Theorem \ref{mainthm} under Lemma \ref{mainlem}]
By Lemma \ref{mainlem} and Proposition \ref{Kant} we can see that there exists a point $\xi^{\infty}$  such that $F(\xi^{\infty})=0$, i.e. a critical point of $\mathcal E(R_{P^0}(\xi))$ and obtain the estimate $\lVert\xi^{\infty}\rVert_{\mathcal X}\leq\tau_*$ with $\tau_*$ in Theorem \ref{mainthm}. Here note that we have $P^0=YY^*$ with $Y:=(\bm y_1 \dotsm \bm y_N)$, where $\bm y_j=(0,\dots,0,1,0,\dots,0)^T\in\mathbb C^{\nu}$ is the vector the only $j$th component of which is not zero.  Since we have
$$R_{P^0}(\xi^{\infty})-P^0=\xi^{\infty}YZ^{-1}X^*-YY^*(\xi^{\infty})^*\xi^{\infty}YZ^{-1}X^*+YY^*(\xi^{\infty})^*,$$
with the notations in Proposition \ref{derivatives} with $\xi$ replaced by $\xi^{\infty}$, the result follows easily if we set $P^{\infty}:=R_{P^0}(\xi^{\infty})$.
\end{proof}
Now it remains to prove Lemma \ref{mainlem}. For the proof we need the following lemma.

\begin{lem}\label{intnorm}
Assume (LMO). Let $A'=(\alpha_{jk})\in \mathbb C^{\nu\times\nu}$ be a matrix. Let us define $\Lambda_{jklm}:=[jk|lm]$,
Define also $T=(t_{jk}),\ \tilde T=(\tilde t_{jk}),\ \hat T=(\hat t_{jk})\in \mathbb C^{\nu\times\nu}$ by 
\begin{align*}
t_{jk}&:=(1-\delta_{jk})\sum_{l,m}(1-\delta_{\{j,k\}\{l,m\}})\Lambda_{jklm}\alpha_{lm},\\
\tilde t_{jk}&:=\sum_{l,m}\Lambda_{jklm}\alpha_{lm},\\
\hat t_{jk}&:=\sum_{l,m}\Lambda_{jmlk}\alpha_{lm},\\
\end{align*}
where
$$\delta_{\{j,k\}\{l,m\}}=\begin{cases}
1 &\quad\mathrm{if}\ \{j,k\}=\{l,m\}\\
0 &\quad\mathrm{otherwise}
\end{cases}.$$
Then we have
\begin{itemize}
\item[(a)] \begin{equation}\label{myeq3.1.1}
\lVert T\rVert_{1,\infty}\leq \tilde\epsilon\lVert A'\rVert_{1,\infty},
\end{equation}
\item[(b)] \begin{equation*}
\lVert \tilde T\rVert_{1,\infty}\leq \tilde C\lVert A'\rVert_{1,\infty},
\end{equation*}
\item[(c)] \begin{equation*}
\lVert \hat T\rVert_{1,\infty}\leq \hat C\lVert A'\rVert_{1,\infty}.
\end{equation*}
\end{itemize}
Moreover, the same estimate as \eqref{myeq3.1.1} holds if we replace $\Lambda_{jklm}$ in the definition of $t_{jk}$ by one of the values given by a permutation of $j,k,l,m$.
\end{lem}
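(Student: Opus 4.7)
The plan is to prove each of (a), (b), (c) by a single recipe: use the relevant clause of (LMO) to bound $|\Lambda_{jklm}|$ pointwise by a product of weights, then iterate the summability inequalities (W)(i), (LMO)(ii), (LMO)(iv) to collapse each inner weight sum to a constant at most $1$, $\tilde C$, or $\hat C$, while exactly one row or column sum of $|\alpha_{lm}|$ contributes the factor $\|A'\|_{\infty}$ or $\|A'\|_{1}$.

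For (a), I would first observe that the indicator factors $(1-\delta_{jk})$ and $(1-\delta_{\{j,k\}\{l,m\}})$ built into $t_{jk}$ enforce precisely the hypothesis of (LMO)(i), giving
\[
|t_{jk}|\le \tilde\epsilon \sum_{l,m} v_{jl}^{-1} w_{jk}^{-1} w_{lm}^{-1}|\alpha_{lm}|.
\]
To estimate $\|T\|_{\infty}=\max_{j}\sum_{k}|t_{jk}|$, sum over $k$ first, peeling off $\sum_{k} w_{jk}^{-1}\le 1$ by (W)(i) and the symmetry of $w$; then over $l$, peeling off $\sum_{l} v_{jl}^{-1}\le 1$ by (LMO)(ii) and the symmetry of $v$; finally bound $\sum_{m} w_{lm}^{-1}|\alpha_{lm}|\le \sum_{m}|\alpha_{lm}|\le \|A'\|_{\infty}$. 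For $\|T\|_{1}=\max_{k}\sum_{j}|t_{jk}|$ the same chain applies with $j$ in place of $k$ in the first step, since $\sum_{j} w_{jk}^{-1}\le 1$ as well. Combining gives $\|T\|_{1,\infty}\le \tilde\epsilon\|A'\|_{\infty}\le \tilde\epsilon\|A'\|_{1,\infty}$.

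Parts (b) and (c) then follow the same template. In (b), (LMO)(iii) provides an analogous bound with $u_{jl}^{-1}$ in place of $v_{jl}^{-1}$; by (LMO)(iv) the row sum of $u^{-1}$ is bounded by $\tilde C$, and the remaining sums collapse as before, yielding the factor $\tilde C$. In (c), (LMO)(v) gives a bound with only $w^{-1}$ weights, but attached to a different pair of indices ($w_{jm}^{-1}$ and $w_{lk}^{-1}$); consequently $\|\hat T\|_{\infty}$ collapses using $\sum_{l}|\alpha_{lm}|\le \|A'\|_{1}$ while $\|\hat T\|_{1}$ collapses using $\sum_{m}|\alpha_{lm}|\le \|A'\|_{\infty}$, and both lead to $\hat C\|A'\|_{1,\infty}$.

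Finally, for the permutation claim, I would note that any permutation of $j,k,l,m$ in $\Lambda_{jklm}$ yields another two-electron integral of the form $[ab|cd]$ (possibly complex-conjugated, using the symmetries of $[\,\cdot\,|\,\cdot\,]$), so (LMO)(i) again supplies the bound $\tilde\epsilon v_{**}^{-1} w_{**}^{-1} w_{**}^{-1}$ with weights attached to the appropriately permuted index pairs. Since $v$ and $w$ are symmetric and their inverse row sums are bounded both in the row and column directions, the sum-swap argument carries through. The main obstacle here is purely bookkeeping: for each permutation one must verify that there is an ordering of the summations in which a $\sum w^{-1}\le 1$ factor can be peeled off over the ``free'' index ($k$ for $\|T\|_{\infty}$, $j$ for $\|T\|_{1}$), the intermediate weight sum collapses, and the remaining $|\alpha_{lm}|$ sum fits into $\|A'\|_{\infty}$ or $\|A'\|_{1}$. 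The symmetry of $v$ and $w$ together with the two-sided summability of their inverses are precisely what guarantees that this ordering is always available.
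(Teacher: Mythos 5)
For parts (a), (b), (c) your proof is essentially identical to the paper's: apply the relevant clause of (LMO) pointwise, then collapse the triple sum index by index using (W)(i), (LMO)(ii), (LMO)(iv) or (LMO)(v), peeling off one row/column sum of $|\alpha_{lm}|$ as $\lVert A'\rVert_\infty$ or $\lVert A'\rVert_1$. You also correctly spot the cross-over in part (c), where $\lVert\hat T\rVert_\infty$ is controlled by $\lVert A'\rVert_1$ and $\lVert\hat T\rVert_1$ by $\lVert A'\rVert_\infty$, which still closes under $\lVert\cdot\rVert_{1,\infty}$.

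Your treatment of the final permutation claim, however, has a genuine gap. You write that ``(LMO)(i) again supplies the bound,'' but the hypothesis of (LMO)(i) is about the \emph{permuted} index pairs: to bound $|\Lambda_{jlkm}|=|[jl|km]|$ via (LMO)(i) one needs $\{j,l\}\neq\{k,m\}$ together with ($j\neq l$ or $k\neq m$). The indicators built into $t_{jk}$ only enforce $j\neq k$ and $\{j,k\}\neq\{l,m\}$, and it is not automatic that these imply the permuted conditions. The paper devotes a separate argument precisely to this: it shows that $\{j,k\}\neq\{l,m\}$ and $j\neq k$ force $\{j,l\}\neq\{k,m\}$ (otherwise $j=m$, $l=k$, contradicting $\{j,k\}\neq\{l,m\}$) and also exclude $j=l$ \emph{and} $k=m$ simultaneously. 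You instead describe the remaining work as ``purely bookkeeping'' about the order of summation, which addresses the weight-collapsing step but silently assumes (LMO)(i) applies. That assumption is the substantive point of the permutation claim, and your proof as written does not verify it.
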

\begin{proof}
(a) It follows from (LMO) (i), (ii) and (W) (i) that
\begin{align*}
\lVert T\rVert_{\infty}&\leq\max_j\sum_{k\neq j}\sum_{l,m}(1-\delta_{\{j,k\}\{l,m\}})\Lambda_{jklm}|\alpha_{l m}|\\
&\leq\tilde\epsilon\max_j\sum_{k\neq j}\sum_{l,m}(1-\delta_{\{j,k\}\{l,m\}})v_{jl}^{-1}w_{jk}^{-1}w_{lm}^{-1}|\alpha_{lm}|\\
&\leq\tilde\epsilon\lVert A'\rVert_{\infty},
\end{align*}
where the sums are accumulated in the order of $m,l,k$. The estimate for $\lVert T\rVert_{1}$ is similar.
Next let us consider the case in which $\Lambda_{jklm}$ is replaced by $\Lambda_{jlkm}$. Note first that it follows from $\{j,k\}\neq\{l,m\}$ and $j\neq k$ that $\{j,l\}\neq \{ k,m\}$ and either $j\neq l$ or $k\neq m$ as follows. If $\{j,l\}=\{k,m\}$, it follows from $j\neq k$ that $j=m$ and $l=k$, which contradicts $\{j,k\}\neq\{l,m\}$. Moreover, if $j=l$ and $k=m$, we again have $\{j,k\}=\{l,m\}$, which is a contradiction. Thus by (LMO) (i) and (W) (i) we have
\begin{align*}
\lVert T\rVert_{\infty}&\leq\max_j\sum_{k\neq j}\sum_{l,m}(1-\delta_{\{j,k\}\{l,m\}})\Lambda_{jlkm}|\alpha_{lm}|\\
&\leq\tilde\epsilon\max_j\sum_{k\neq j}\sum_{l,m}(1-\delta_{\{j,k\}\{l,m\}})v_{jk}^{-1}w_{jl}^{-1}w_{km}^{-1}|\alpha_{lm}|\\
&\leq\tilde\epsilon\lVert A'\rVert_{\infty},
\end{align*}
where the sums are accumulated in the order of $k,m,l$. The estimates for $\lVert T \rVert_{1}$ are similar to those for $\lVert T\rVert_{\infty}$. (The roles of $j$ and $k$ are interchanged.) The estimate for $\Lambda_{jmlk}$ is similar. Estimates for the other cases are essentially the same as those for the cases already mentioned.

(b) By (LMO) (iii), (iv) and (W) (i) we have
\begin{align*}
\lVert \tilde T\rVert_{\infty}&\leq\max_j\sum_{k}\sum_{l,m}\Lambda_{jklm}|\alpha_{lm}|\\
&\leq \max_j\sum_{k}\sum_{l,m}u_{jl}^{-1}w_{jk}^{-1}w_{lm}^{-1}|\alpha_{lm}|\\
&\leq \tilde C\lVert A'\rVert_{\infty},
\end{align*}
where the sums are accumulated in the order of $m,l,k$. The estimate for $\lVert \tilde T\rVert_{1}$ is similar.

(c) By (LMO)  (v) and (W) (i) we have
\begin{align*}
\lVert \hat T\rVert_{\infty}&\leq\max_j\sum_{k}\sum_{l,m}\Lambda_{jmlk}|\alpha_{lm}|\\
&\leq \hat C\max_j\sum_{k}\sum_{l,m}w_{jm}^{-1}w_{lk}^{-1}|\alpha_{lm}|\\
&\leq \hat C\lVert A'\rVert_{1},
\end{align*}
where the sums are accumulated in the order of $k,l,m$. The estimate for $\lVert \hat T\rVert_{1}$ is similar.
\end{proof}

We prove Lemma \ref{mainlem} calculating the derivatives of $\mathcal E(R_{P^0}(\xi))$ and estimating the derivatives. Here we mention that an expression of the Hessian of a function on the Grassmann manifold as a linear mapping was obtained by \cite{SI} and a somewhat abstract expression of the Hessian of $\mathcal E(P)$ at a local minimizer was obtained also by \cite{CKL} without using a retraction.
\begin{proof}[proof of Lemma \ref{mainlem}]
(1) Note first that if we denote by $\bm y_j:=(0,\dots,0,1,0,\dots,0)^T\in \mathbb C^{\nu},\ j=1,\dots,\nu$ the vector the only $j$th components of which is not zero, we can choose $Y:=(\bm y_1,\dots,\bm y_N)$ and $Y_{\perp}:=(\bm y_{N+1},\dots,\bm y_{\nu})$ as the matrices in Section \ref{secondsec}.  Thus the basis $\{\eta_{jk}\}\cup\{\hat \eta_{jk}\}$ of $T_{P^0}Gr(N,\nu)$ is expressed as $\eta_{jk}=\frac{1}{2}(\tilde E_{jk}+\tilde E_{kj})$ and $\hat\eta_{jk}=\frac{1}{2}(i\tilde E_{jk}-i\tilde E_{kj})$, where $\tilde E_{jk}$ is the $\nu\times\nu$ matrix whose $(j,k)$-component is $1$ and the others are $0$. Therefore, by Proposition \ref{derivatives} (i) and direct calculations we have
\begin{equation*}
\begin{split}
d(\mathcal E(R_{P^0}(\cdot)))(0;\eta_{jk})&=d\mathcal E(P^0;dR_{P^0}(0;\eta_{jk}))\\
&=\frac{1}{2}(\langle\phi_k,\mathcal F(\Phi^0)\phi_j\rangle+\langle\phi_j,\mathcal F(\Phi^0)\phi_k\rangle),
\end{split}
\end{equation*}
and
\begin{equation*}
d(\mathcal E(R_{P^0}(\cdot)))(0;\hat\eta_{jk})=\frac{1}{2}(i\langle\phi_k,\mathcal F(\Phi^0)\phi_j\rangle-i\langle\phi_j,\mathcal F(\Phi^0)\phi_k\rangle).
\end{equation*}
Thus \eqref{myeq2.1.3} and the condition (OI) (i) give $\lVert F(0)\rVert_{\mathcal X}=\lVert d(\mathcal E(R_{P^0}(\cdot))(0)\rVert_{\mathcal X}\leq\epsilon$.

(2) Let us denote by $K:T_{P^0}Gr(N,\nu)\to T_{P^0}Gr(N,\nu)$ the $2N(\nu-N)\times 2N(\nu-N)$ real block diagonal matrix whose diagonal blocks are 
\begin{align*}
&\begin{pmatrix}
d^2(\mathcal E(R_{P^0}(\xi)))(0;\eta_{jk},\eta_{jk}) &&d^2(\mathcal E(R_{P^0}(\xi)))(0;\eta_{jk},\hat\eta_{jk})\\
d^2(\mathcal E(R_{P^0}(\xi)))(0;\hat\eta_{jk},\eta_{jk}) && d^2(\mathcal E(R_{P^0}(\xi)))(0;\hat\eta_{jk},\hat\eta_{jk})
\end{pmatrix},\\
&\qquad1\leq j\leq N,\ N+1\leq k\leq \nu.
\end{align*}
We can immediately see that these diagonal blocks are those of $F'(0)$. Let us first estimate these diagonal blocks. We have
\begin{align*}
d^2\mathcal E(P^0;\eta_{jk},\eta_{jk})&=\frac{1}{4}([kj|kj]+[kj|jk]+[jk|kj]+[jk|jk])\\
&\quad-\frac{1}{4}([kj|kj]+[kk|jj]+[jj|kk]+[jk|jk])\\
&=\frac{1}{2}([kj|jk]-[kk|jj]).
\end{align*}
By \eqref{myeq2.1.1} we also have
\begin{align*}
d\mathcal E(P^0;d^2R_{P^0}(0;\eta_{jk},\eta_{jk}))=\frac{1}{2}\langle\phi_k,\mathcal F(\Phi^0)\phi_k\rangle-\frac{1}{2}\langle\phi_j,\mathcal F(\Phi^0)\phi_j\rangle.
\end{align*}
Thus (OI) (iii) gives
\begin{equation}\label{myeq3.1.2}
\begin{split}
&d^2(\mathcal E(R_{P^0}(\cdot)))(0;\eta_{jk},\eta_{jk})\\
&\quad=d^2\mathcal E(P^0;\eta_{jk},\eta_{jk})+d\mathcal E(P^0;d^2R_{P^0}(0;\eta_{jk},\eta_{jk}))\\
&\quad=\frac{1}{2}\langle\phi_k,\mathcal F(\Phi^0)\phi_k\rangle-\frac{1}{2}\langle\phi_j,\mathcal F(\Phi^0)\phi_j\rangle+\frac{1}{2}([kj|jk]-[kk|jj])\\
&\quad\geq \frac{\gamma}{2}.
\end{split}
\end{equation}
In the same way we have
\begin{equation}\label{myeq3.2}
\begin{split}
&d^2(\mathcal E(R_{P^0}(\cdot)))(0;\hat\eta_{jk},\hat\eta_{jk})\\
&\quad=\frac{1}{2}\langle\phi_k,\mathcal F(\Phi^0)\phi_k\rangle-\frac{1}{2}\langle\phi_j,\mathcal F(\Phi^0)\phi_j\rangle+\frac{1}{2}([kj|jk]-[kk|jj])\geq \frac{\gamma}{2}.
\end{split}
\end{equation}
Moreover, a direct calculation yields
\begin{align*}
d^2\mathcal E(P^0;\eta_{jk},\hat\eta_{jk})=&\frac{1}{4}(i[kj|kj]-i[kj|jk]+i[jk|kj]-i[jk|jk])\\
&-\frac{1}{4}(i[kj|kj]-i[kk|jj]+i[jj|kk]-i[jk|jk])=0.
\end{align*}
Since \eqref{myeq2.1.2} gives
$$d\mathcal E(P^0;d^2R_{P^0}(0;\eta_{jk},\hat\eta_{jk}))=0,$$
we obtain 
\begin{equation}\label{myeq3.3}
d^2(\mathcal E(R_{P^0}(\cdot)))(0;\eta_{jk},\hat\eta_{jk})=0.
\end{equation}
It follows from \eqref{myeq3.1.2}--\eqref{myeq3.3} that $\lVert K^{-1}\rVert_{\mathcal L(\mathcal X)}\leq\frac{2}{\gamma}$. 

As for the off-diagonal elements of $F'(0)$ we have
$$d^2\mathcal E(P^0;\eta_{jk},\zeta)+d^2\mathcal E(P^0;\hat\eta_{jk},\zeta)i=\frac{1}{2}\sum_{\substack{l\in J_o\\ m\in J_u}}(\langle jm||kl\rangle b_{lm}+\langle jl||km\rangle b_{lm}^*),$$
and
\begin{align*}
&d\mathcal E(P^0;d^2R_{P^0}(0;\eta_{jk},\zeta))+d\mathcal E(P^0;d^2R_{P^0}(0;\hat\eta_{jk},\zeta))i\\
&\quad=\frac{1}{2}\sum_{\substack{l\in J_o\\ m\in J_u}}b_{lm}(\delta_{jl}\langle \phi_m,\mathcal F(\Phi^0)\phi_k\rangle-\delta_{km}\langle\phi_j,\mathcal F(\Phi^0)\phi_l\rangle),
\end{align*}
where $\zeta=\mathrm{sym}(YBY_{\perp}^*),\ B=(b_{lm})$. Let us estimate the off-diagonal elements.
It follows from (OI) (ii) that
$$\max_{j\in J_o}\sum_{k\in J_u}\sum_{\substack{l\in J_o\\ m\in J_u}}|(1-\delta_{jl}\delta_{km})\delta_{jl}\langle\phi_m,\mathcal F(\Phi^0)\phi_k\rangle b_{lm}|<\delta\lVert B\rVert_{\infty},$$
and
$$\max_{j\in J_o}\sum_{k\in J_u}\sum_{\substack{l\in J_o\\ m\in J_u}}|(1-\delta_{jl}\delta_{km})\delta_{km}\langle\phi_j,\mathcal F(\Phi^0)\phi_l\rangle b_{lm}|<\delta\lVert B\rVert_{\infty},$$
where the sums are accumulated in the order of $l,k,m$ and $m,k,l$ respectively.
By \eqref{myeq3.1.1} we obtain
$$\max_{j\in J_o}\sum_{k\in J_u}\sum_{\substack{l\in J_o\\ m\in J_u}}|(1-\delta_{jl}\delta_{km})\langle jm||kl\rangle b_{lm}|\leq 2\tilde\epsilon\lVert B\rVert_{1,\infty}.$$
The same estimate holds for the term $\langle jl||km\rangle b_{lm}^*$. The estimates for the cases in which the maximum and the summation with respect to $j$ and $k$ are interchanged are similar. Thus if we denote by $\tilde K:T_{P^0}Gr(N,\nu)\to T_{P^0}Gr(N,\nu)$ the $2N(\nu-N)\times 2N(\nu-N)$ real matrix defined by $F'(0)-K$, we have $\lVert\tilde K\rVert_{\mathcal L(\mathcal X)}\leq \delta+2\tilde\epsilon$. Therefore, by the Neumann series $(K+\tilde K)^{-1}=K^{-1}\sum_{j=0}^{\infty}(-1)^j(K^{-1}\tilde K)^j$ and the assumption $\frac{\gamma}{2}-\delta-2\tilde\epsilon>0$ we can see that $K+\tilde K$ is invertible and $\lVert  (K+\tilde K)^{-1}\rVert_{\mathcal L(\mathcal X)}\leq\frac{2}{\gamma}\frac{1}{1-\frac{2}{\gamma}(\delta+2\tilde\epsilon)}=\frac{1}{\frac{\gamma}{2}-\delta-2\tilde\epsilon}=c_*$. This means $\lVert F'(0)^{-1}\rVert_{\mathcal L(\mathcal X)}\leq c_*$.

(3) As for the Lipschitz continuity of $F'(\xi)$ we have
\begin{align*}
\lVert F'(\tilde\xi)-F'(\xi)\rVert_{\mathcal L(\mathcal X)}&=\sup_{\lVert\zeta\rVert_{\mathcal X}=1}\lVert dF(\tilde\xi;\zeta)-dF(\xi;\zeta)\rVert_{\mathcal X}\\
&=\sup_{\lVert\zeta\rVert_{\mathcal X}=1}\left\lVert\int_0^1d^2F(\xi+(\tilde \xi-\xi)t;\zeta,\tilde\xi-\xi)dt\right\rVert_{\mathcal X}.
\end{align*}
Thus if we show $\lVert d^2F(\xi;\zeta,\tilde\zeta)\rVert_{\mathcal X}\leq L_{\hat\epsilon}\lVert \zeta\rVert_{\mathcal X}\lVert\tilde\zeta\rVert_{\mathcal X}$ for $\xi\in\bar U(0;\hat\epsilon)$, we obtain the Lipschitz continiuity. By Lemma \ref{sumi} this means that the inequality
$$\max_{j\in J_o}\sum_{k\in J_u}|NA_{\eta_{jk}^b}\{d^3(\mathcal E(R_{P^0}(\cdot)))(\xi;\eta_{jk}^b,\zeta,\tilde\zeta)\}|\leq L_{\hat\epsilon}\lVert \zeta\rVert_{\mathcal X}\lVert\tilde\zeta\rVert_{\mathcal X},$$
and the inequality in which the indices of summation $k$ and maximum $j$ are interchanged hold.

We define $\mathcal T(P):=\sum_{j,k}p_{jk}\langle \nabla\phi_k,\nabla\phi_j\rangle$, $\mathcal V(P):=\sum_{j,k}p_{jk}\langle \phi_k,V\phi_j\rangle$ and
\begin{align*}
\tilde {\mathcal G}(P)&:=\frac{1}{2}\sum_{j,k,l,m}p_{jk}p_{lm}[kj|ml],\\
\hat {\mathcal G}(P)&:=-\frac{1}{2}\sum_{j,k,l,m}p_{jk}p_{lm}[kl|mj].
\end{align*}
Then $\mathcal E(P)$ is rewritten as $\mathcal E(P)=\mathcal T(P)+\mathcal V(P)+\tilde {\mathcal G}(P)+\hat {\mathcal G}(P)$.
The third derivative of $\tilde {\mathcal G}(R_{P^0}(\xi))$ is given by
\begin{align*}
d^3(\tilde {\mathcal G}(R_{P^0}(\cdot)))(\xi;\eta_{jk}^b,\tilde\zeta,\hat \zeta)=&d\tilde {\mathcal G}(R_{P^0}(\xi);d^3R_{P^0}(\xi;\eta_{jk}^b,\tilde\zeta,\hat\zeta))\\
&+d^2\tilde {\mathcal G}(R_{P^0}(\xi);d^2R_{P^0}(\xi;\eta_{jk}^b,\tilde\zeta),dR_{P^0}(\xi;\hat\zeta))\\
&+d^2\tilde {\mathcal G}(R_{P^0}(\xi);d^2R_{P^0}(\xi;\eta_{jk}^b,\hat\zeta),dR_{P^0}(\xi;\tilde\zeta))\\
&+d^2\tilde {\mathcal G}(R_{P^0}(\xi);dR_{P^0}(\xi;\eta_{jk}^b),d^2R_{P^0}(\xi;\tilde\zeta,\hat\zeta)).
\end{align*}

Here we note that if a function $f$ is given by $f(B')=\sum_{j,k}d_{kj}\beta_{jk}$ for $B'=(\beta_{jk})\in\mathbb C^{\nu\times\nu}$, we have
$$f(A\eta_{jk}^b\tilde A)=\sum_{l,m}d_{ml}a_{lk}\tilde a_{jm},$$
for $A=(a_{jk})$ and $\tilde A=(\tilde a_{jk})$.
Hence we can estimate as
\begin{align*}
\max_{j\in J_o}\sum_{k\in J_u}|f(A\eta_{jk}^b\tilde A)|\leq \lVert D\rVert_{1,\infty}\lVert A\rVert_{1,\infty}\lVert\tilde A\rVert_{1,\infty},\\
\max_{k\in J_u}\sum_{j\in J_o}|f(A\eta_{jk}^b\tilde A)|\leq \lVert D\rVert_{1,\infty}\lVert A\rVert_{1,\infty}\lVert \tilde A\rVert_{1,\infty},
\end{align*}
where $D:=(d_{jk})$.
Therefore, if we write $\xi=Y^0BY_{\perp}^0$, $\tilde \zeta=Y^0\tilde BY_{\perp}^0$ and $\hat\zeta=Y^0\hat BY_{\perp}^0$, using Lemma \ref{intnorm} (b) and Proposition \ref{derivatives} we obtain
\begin{align*}
&\max_{j\in J_o}\sum_{k\in J_u}\left|NA_{\eta_{jk}^b}\left\{d\tilde {\mathcal G}(R_{P^0}(\xi);d^3R_{P^0}(\xi;\eta_{jk}^b,\tilde\zeta,\hat\zeta))\right\}\right|\leq \tilde C_{B}\lVert  \tilde B\rVert_{1,\infty}\lVert \hat B\rVert_{1,\infty},\\
&\max_{j\in J_o}\sum_{k\in J_u}\left|NA_{\eta_{jk}^b}\left\{d^2\tilde {\mathcal G}(R_{P^0}(\xi);d^2R_{P^0}(\xi;\eta_{jk}^b,\tilde\zeta),dR_{P^0}(\xi;\hat\zeta))\right\}\right|\leq \tilde D_{B}\lVert  \tilde B\rVert_{1,\infty}\lVert \hat B\rVert_{1,\infty},\\
&\max_{j\in J_o}\sum_{k\in J_u}\left|NA_{\eta_{jk}^b}\left\{d^2\tilde {\mathcal G}(R_{P^0}(\xi);d^2R_{P^0}(\xi;\eta_{jk}^b,\hat\zeta),dR_{P^0}(\xi;\tilde\zeta))\right\}\right|\leq \tilde D_{B}\lVert  \tilde B\rVert_{1,\infty}\lVert \hat B\rVert_{1,\infty},\\
&\max_{j\in J_o}\sum_{k\in J_u}\left|NA_{\eta_{jk}^b}\left\{d^2\tilde {\mathcal G}(R_{P^0}(\xi);dR_{P^0}(\xi;\eta_{jk}^b),d^2R_{P^0}(\xi;\tilde\zeta,\hat\zeta))\right\}\right|\leq \tilde D_{B}\lVert  \tilde B\rVert_{1,\infty}\lVert \hat B\rVert_{1,\infty},
\end{align*}
where $\tilde C_{B}>0$ and $\tilde D_B$ are given by
\begin{align*}
\tilde C_B:=&6\tilde C(1+\lVert B\rVert)^2(1-\lVert B\rVert^2)^{-3}\\
&\cdot\{1+2\lVert B\rVert(1+(1-\lVert B\rVert^2)^{-1}(1+\lVert B\rVert)(1+3\lVert B\rVert))\\
&+2(1-\lVert B\rVert^2)^{-2}(1+\lVert B\rVert)^2\lVert B\rVert^2)\},
\end{align*}
and
\begin{align*}
\tilde D_B&:=2\tilde C(1+\lVert B\rVert)(1-\lVert B\rVert^2)^{-2}\{1+(1-\lVert B\rVert^2)^{-1}(1+\lVert B\rVert)\lVert B\rVert\}\\
&\cdot\{ 1+(1-\lVert B\rVert^2)^{-1}(1+\lVert B\rVert)(1+5\lVert B\rVert)+4(1-\lVert B\rVert^2)^{-2}(1+\lVert B\rVert)^2\lVert B\rVert^2\}.
\end{align*}
Here we omitted the subscript $1,\infty$ of the norm.
As a result we obtain
\begin{equation}\label{myeq3.4}
\max_{j\in J_o}\sum_{k\in J_u}\left|NA_{\eta_{jk}^b}\left\{d^3(\tilde {\mathcal G}(R_{P^0}(\cdot)))(\xi;\eta_{jk}^b,\tilde\zeta,\hat \zeta)\right\}\right|\leq (\tilde C_B+3\tilde D_{B})\lVert  \tilde B\rVert_{1,\infty}\lVert \hat B\rVert_{1,\infty}.
\end{equation}

As for $\hat {\mathcal G}(P)$ using Lemma \ref{intnorm} (c) an argument similar to that of $\tilde {\mathcal G}(P)$ yields
\begin{equation}\label{myeq3.5}
\max_{j\in J_o}\sum_{k\in J_u}\left|NA_{\eta_{jk}^b}\left\{d^3(\hat {\mathcal G}(R_{P^0}(\cdot)))(\xi;\eta_{jk}^b,\tilde\zeta,\hat \zeta)\right\}\right|\leq (\hat C_B+3\hat D_{B})\lVert  \tilde B\rVert_{1,\infty}\lVert \hat B\rVert_{1,\infty},
\end{equation}
where $\hat C_B$ and $\hat D_B$ are defined by replacing $\tilde C$ in $\tilde C_B$ and $\tilde D_B$ by $\hat C$ respectively.
As for $\mathcal V(P)$ and $\mathcal T(P)$ using (NI) and (LMO) (vi) respectively we have
\begin{equation}\label{myeq3.6}
\max_{j\in J_o}\sum_{k\in J_u}\left|NA_{\eta_{jk}^b}\left\{d^3\mathcal V(\xi;\eta_{jk}^b,\zeta,\tilde\zeta)\right\}\right|\leq \breve C_B\lVert  \tilde B\rVert_{1,\infty}\lVert \hat B\rVert_{1,\infty},
\end{equation}
and
\begin{equation}\label{myeq3.7}
\max_{j\in J_o}\sum_{k\in J_u}\left|NA_{\eta_{jk}^b}\left\{d^3\mathcal T(\xi;\eta_{jk}^b,\zeta,\tilde\zeta)\right\}\right|\leq \check C_B\lVert  \tilde B\rVert_{1,\infty}\lVert \hat B\rVert_{1,\infty},
\end{equation}
where $\breve C_B$ and $\check C_B$ are defined by replacing $\tilde C$ in $\tilde C_B$ by $\breve C$ and $\check C$ respectively.
The estimates corresponding to \eqref{myeq3.4}--\eqref{myeq3.7} for the summation with respect to $j$ are obtained in the same way, which means the result.
\end{proof}

\section{Idea of the conditions of localized molecular orbitals}\label{fourthsec}
 The meaning of (OI) is explained in Remark \ref{gaprem}. In order to argue the validity of (LMO) we first assume that each $\phi_j(x)$ is localized near a reference point $q_j\in\mathbb R^3$, that is, $|\phi_j(x)|$ can have relatively large values only in a relatively small bounded region including $q_j$ and decays as $x$ leaves away from the region. The factor  $w_{jk}^{-1}$ (resp., $w_{lm}^{-1}$) in (LMO) (i), (iii) and (v) would come from the decay of $\phi_j^*(x)\phi_k(x)$ (resp., $\phi_l^*(y)\phi_m(y)$) as $|q_j-q_k|$ (resp., $|q_l-q_m|$) increases. Intuitively, we assume $w_{jk}^{-1}=|q_j-q_k|^{-s},\ s>1$.
 
 The factor $v_{jl}^{-1}$ in (LMO) (i) is explained as follows. Assume $j\neq k$. By Taylor's theorem $\frac{1}{|x-y|}$ is rewritten as
$$\frac{1}{|x-y|}=\frac{1}{|q_j-y|}-\int_0^1(x-q_j)\cdot\frac{tx+(1-t)q_j-y}{|tx+(1-t)q_j-y|^3}dt.$$
Since $j\neq k$, we have $\int\phi_j^*(x)\phi_k(x)dx=0$. Thus we can see that
\begin{equation}\label{myeq4.2}
\begin{split}
&\int\phi_j^*(x)\phi_k(x)\frac{1}{|x-y|}\phi_l^*(y)\phi_m(y)dxdy\\
&\quad=-\int_0^1\int\phi_j^*(x)\phi_k(x)(x-q_j)\cdot\frac{tx+(1-t)q_j-y}{|tx+(1-t)q_j-y|^3}\phi_l^*(y)\phi_m(y)dxdydt.
\end{split}
\end{equation}
Since $\phi_j^*(x)$ and $\phi_l^*(y)$ are localized near $q_j$ and $q_l$ respectively, the factor $\frac{tx+(1-t)q_j-y}{|tx+(1-t)q_j-y|^3}$ in \eqref{myeq4.2} would yield the decay factor $v_{jl}^{-1}\sim\frac{1}{|q_j-q_l|^s},\ s>1$. When $l\neq m$ similar estimates would hold replacing $q_j$ and $x$ by $q_l$ and $y$ respectively in \eqref{myeq4.2}.

Assume again $j\neq k$. Then the factor $\tilde \epsilon$ in (LMO) (i) would come from the smallness of $\phi_j^*(x)\phi_k(x)$, $\phi_l^*(y)\phi_m(y)$ or $\frac{tx+(1-t)q_j-y}{|tx+(1-t)q_j-y|^3}$, when $|q_j-q_k|$, $|q_l-q_k|$ or $|q_j-q_l|$ is large. Otherwise, it would come from the small overlap of components  of $\phi_j\phi^*_k$ and $\phi_l^*\phi_m$ with respect to the spherical harmonics centered at some point $\tilde q$ close to $q_j, q_k, q_l$. The Coulomb potential $\frac{1}{|x-y|}$ is expanded by the Laplace expansion (see e.g. \cite{Ja}[Section 3.6]) as
 \begin{equation*}
 \frac{1}{|x-y|}=4\pi\sum_{\lambda=0}^{\infty}\sum_{\mu=-\lambda}^{\lambda}\frac{1}{2\lambda+1}\frac{r_<^{\lambda}}{r_>^{\lambda+1}}Y^*_{\lambda\mu}(\theta_y,\varphi_y)Y_{\lambda\mu}(\theta_x,\varphi_x),
 \end{equation*}
where $(\theta_x,\varphi_x, r_x)$ and $(\theta_y,\varphi_y, r_y)$ are spherical coordinates of $x-\tilde q$ and $y-\tilde q$ respectively, and $r_>:=\max\{r_x,r_y\}, r_<:=\min\{r_x,r_y\}$. Thus we have
\begin{equation}\label{myeq4.1}
\begin{split}
&\int\phi_j^*(x)\phi_k(x)\frac{1}{|x-y|}\phi_l^*(y)\phi_m(y)dxdy\\
&\quad=4\pi\sum_{\lambda=0}^{\infty}\sum_{\mu=-\lambda}^{\lambda}\frac{1}{2\lambda+1}\int\int\bigg\{\frac{r_<^{\lambda}}{r_>^{\lambda+1}}\left(\int\int\phi_j^*(x)\phi_k(x)Y_{\lambda\mu}(\theta_x,\varphi_x)d\theta_xd\varphi_x\right)\\
&\hspace{140pt}\cdot\left(\int\int Y^*_{\lambda\mu}(\theta_y,\varphi_y)\phi_l^*(y)\phi_m(y)d\theta_yd\varphi_y\right)\bigg\}dr_xdr_y,
\end{split}
\end{equation}
Since $\int \phi_j(x)\phi^*_k(x)dx=0$ follows from $j\neq k$, the term corresponding to $Y_{00}=\frac{1}{\sqrt{4\pi}}$ in \eqref{myeq4.1} vanishes. The higher order terms would also be small, because the condition $\{j,k\}\neq \{l,m\}$ means that $\phi_j\phi^*_k\neq \phi_l^*\phi_m$ and the main components of $\phi_j\phi^*_k$ and $\phi_l^*\phi_m$ with respect to the spherical harmonics would be different. When $l\neq m$, the estimate is obtained in a similar way.

Finally the factor $u_{jl}^{-1}$ in (LMO) (iii) comes from the factor $\frac{1}{|x-y|}$ and the localization of $\phi_j^*(x)$ and $\phi_l^*(y)$. Note here that $\sum_lu_{jl}^{-1}$ would depend on the size of the molecule, even if the shape of the molecule is linear. The meaning of (NI) can be seen by a similar way as in that of (LMO).

\section{Application to structures of connected molecules}\label{fifthsec}
In this section we illustrate how the existence of a critical point of the Hartree-Fock functional close to an approximate critical point leads to the existence of an electronic structure of a connected molecule preserving the electronic structures of molecular fragments under several assumptions. The content of this section is not a complete proof of the assumptions (LMO) (OI) and (NI) in a particular situation. Actually, we see just that the assumptions are plausible.
\subsection{Molecular orbitals of a connected molecule}

Let us consider $M$ molecules. Assume that the molecule $\alpha=1,\dots,M$ has $n_{\alpha}$ nuclei and $\tilde N_{\alpha}$ electrons. Let $\bar x_1^{\alpha},\dots,\bar x^{\alpha}_{n_{\alpha}}$ (resp., $Z_1^{\alpha},\dots,Z_{n_{\alpha}}^{\alpha}$) be the positions (resp., atomic numbers) of $n_{\alpha}$ nuclei of the molecule $\alpha$. We consider a discretized problem of the molecule $\alpha$ by a basis $\{\chi^{\alpha}_1,\dots,\chi^{\alpha}_{\nu_{\alpha}}\}$, where $\chi_j^{\alpha}\in H^2(\mathbb R^3\times \mathcal S;\mathbb C)\subset L^2(\mathbb R^3\times \mathcal S;\mathbb C)$ is centered at one of $\{\bar x_1^{\alpha},\dots,\bar x^{\alpha}_{n_{\alpha}}\}$. In other words, we consider critical points of the Hartree-Fock functional $E^{\alpha}(\Phi^{\alpha})$ of the molecule $\alpha$ restricted to $\mathcal L(\chi^{\alpha}_1,\dots,\chi^{\alpha}_{\nu_{\alpha}})$, where $E^{\alpha}(\Phi^{\alpha})$ is the functional for $\tilde N_{\alpha}$ electrons in which nuclear positions (resp., atomic numbers) are $\bar x_1^{\alpha},\dots,\bar x^{\alpha}_{n_{\alpha}}$ (resp., $Z^{\alpha}_1,\dots,Z^{\alpha}_{n_{\alpha}}$). Let us call $\chi^{\alpha}_1,\dots,\chi^{\alpha}_{\nu_{\alpha}}$ atomic orbitals.

Let $\tilde\Phi^{\alpha}=(\psi^{\alpha}_1,\dots,\psi^{\alpha}_{\tilde N_{\alpha}}),\ \langle\psi_j^{\alpha},\psi_k^{\alpha}\rangle=\delta_{jk}$ be a solution to the corresponding Hartree-Fock equation
$$\mathcal F^{\alpha}(\tilde \Phi^{\alpha})\psi^{\alpha}_j=\epsilon^{\alpha}_j\psi_j^{\alpha},\ j=1\dots,\tilde N_{\alpha}.$$
for the molecule $\alpha$ containing $\tilde N_{\alpha}$ electrons, where $\mathcal F^{\alpha}(\Phi^{\alpha})$ is the Fock operator associated with $E^{\alpha}(\Phi^{\alpha})$. We assume that there exists a unitary matrix $R=(r_{jk})$ such that $\varphi_j^{\alpha}:=\sum_{k=1}^{\tilde N_{\alpha}}r_{jk}\psi^{\alpha}_k,\ j=1,\dots,\tilde N_{\alpha}$ are localized molecular orbitals (regarded as occupied orbitals) in the sense that estimates corresponding to (LMO), (OI) (ii) and (NI) hold. We also assume that adding localized orthonormal functions $\{\varphi^{\alpha}_{\tilde N_{\alpha}+1},\dots,\varphi^{\alpha}_{\nu_{\alpha}}\}$ (regarded as unoccupied orbitals) to $\{\varphi^{\alpha}_1,\dots,\varphi^{\alpha}_{\tilde N_{\alpha}}\}$ we can choose an orthonormal basis $\{\varphi^{\alpha}_1,\dots,\varphi^{\alpha}_{\tilde N_{\alpha}},\varphi^{\alpha}_{\tilde N_{\alpha}+1},\dots,\varphi^{\alpha}_{\nu_{\alpha}}\}$ of the linear subspace $\mathcal L(\chi^{\alpha}_1,\dots,\chi^{\alpha}_{\nu_{\alpha}})$ of $L^2(\mathbb R^3\times \mathcal S;\mathbb C)$ that also satisfy (LMO), (OI) and (NI). We associate each $\varphi_j^{\alpha}$ with a point $q^{\alpha}_j\in\mathbb R^3$ which is regarded intuitively as the center of distribution of the electron in the state $\varphi_j^{\alpha}$.
\begin{figure}[h]
\centering
\includegraphics[width=0.9\textwidth]{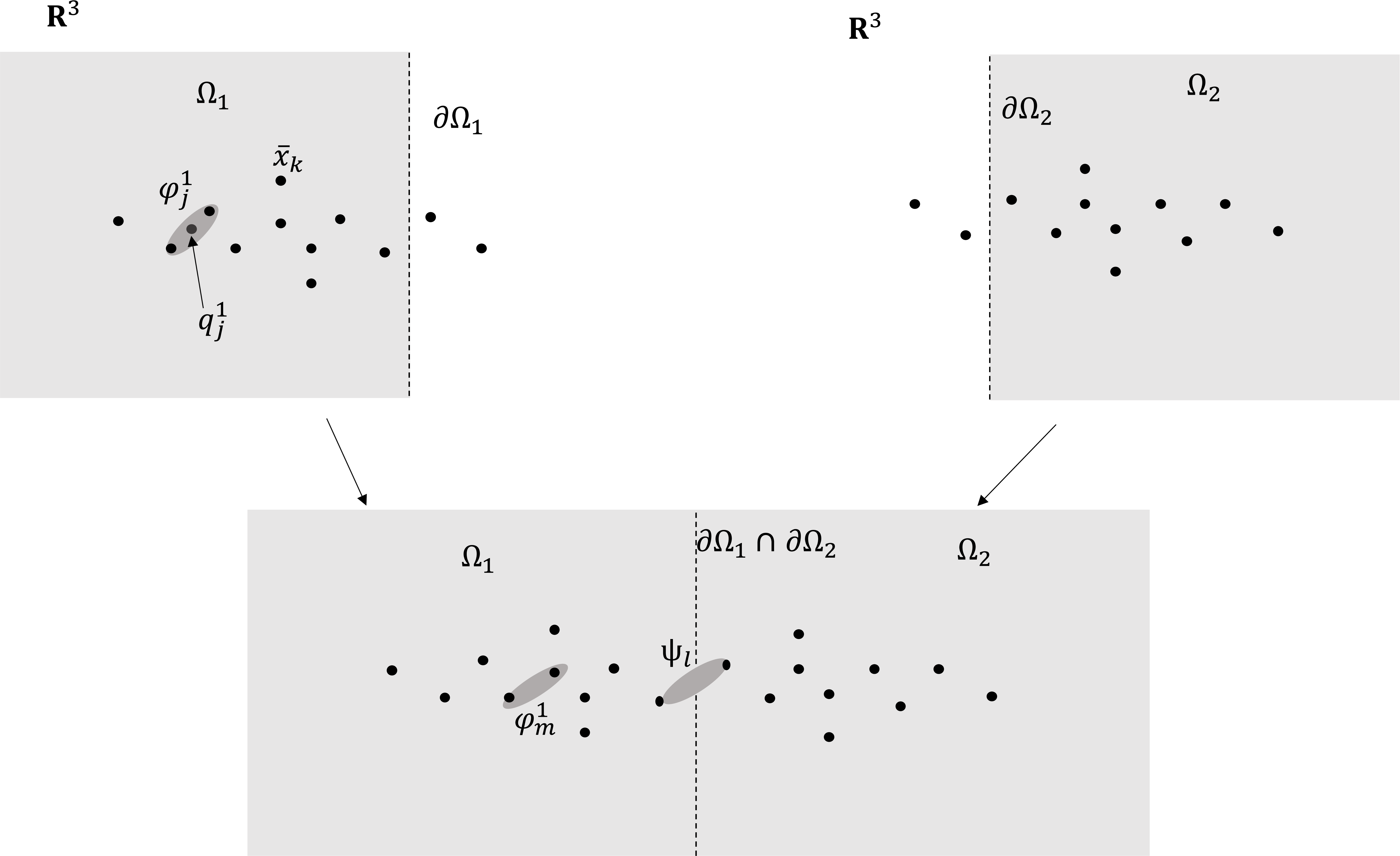}
\caption{Schematic diagram of molecular connection}\label{molfig}
\end{figure}

We assume that there exist simply connected open sets $\Omega_{\alpha}\subset \mathbb R^3,\ 1\leq \alpha\leq M$ such that $\bigcup_{\alpha=1}^M\bar \Omega_{\alpha}=\mathbb R^3$, $\Omega_{\alpha}\cap \Omega_{\beta}=\emptyset,\ \alpha\neq \beta$ and $\partial \Omega_{\alpha}\cap\{\bar x^{\alpha}_1,\dots,\bar x^{\alpha}_{n_{\alpha}}\}=\emptyset$. We consider the molecule whose nuclear positions are $\bigcup_{\alpha=1}^M(\Omega_{\alpha}\cap\{\bar x^{\alpha}_1,\dots,\bar x^{\alpha}_{n_{\alpha}}\})$ (see Figure \ref{molfig}). We relabel the nuclear positions by $1,\dots, n$ as $\{\bar x_1,\dots,\bar x_n\}$. We use the basis functions $\varphi^{\alpha}_j$ whose reference point $q_j^{\alpha}$ is in $\Omega_{\alpha}$ and away from $\partial \Omega_{\alpha}$ directly for the basis set of the combined molecule. However, the functions that contribute the chemical bonds between the molecular fragments would concentrate around $q_j^{\alpha}$ near $\partial \Omega_{\alpha}$. Therefore, we need to reconstruct such functions from the remaining basis functions. Let $\{\varphi^{\alpha}_1,\dots,\varphi^{\alpha}_{\kappa_{\alpha}}\},\ \kappa_{\alpha}<\nu_{\alpha}$ be the basis functions with $q_j^{\alpha}$ in $\Omega_{\alpha}$ and away from $\partial \Omega_{\alpha}$.
 We label $\varphi_j^{\alpha}$ so that $\{\varphi^{\alpha}_1,\dots,\varphi^{\alpha}_{\hat N_{\alpha}}\},\ \hat N_{\alpha}<\kappa_{\alpha}$ are occupied orbitals of the original molecule $\alpha$ away from $\partial\Omega_{\alpha}$.
 
 We assume that the connected molecule contains $N>\hat N:=\sum_{\alpha=1}^M\hat N_{\alpha}$ electrons and that there exist orthonormal functions $\{\psi_1,\dots,\psi_{N-\hat N}\}$ localized near $\bigcup_{\alpha=1}^M\partial \Omega_{\alpha}$ such that $\mathcal L(\psi_1,\dots,\psi_{N-\hat N})\perp \left(\sum_{\alpha=1}^M\mathcal L(\varphi^{\alpha}_1,\dots,\varphi^{\alpha}_{\kappa_{\alpha}})\right)$, and that $\{\psi_1,\dots\newline ,\psi_{N-\hat N}\}\cup\left(\bigcup_{\alpha=1}^M\{\varphi^{\alpha}_1,\dots,\varphi^{\alpha}_{\hat N_{\alpha}}\}\right)$ is close to a critical point of $E(\Phi)$ in $\bigoplus_{j=1}^NL^2(\mathbb R^3\times S;\mathbb C)$, where $E(\Phi)$ is the Hartree-Fock functional for the nuclear positions $\{\bar x_1,\dots\newline ,\bar x_n\}$. (Note that in general $\langle \varphi^{\alpha}_j,\varphi^{\beta}_k\rangle\neq0$ for $\alpha\neq \beta$, and thus the functions do not satisfy the orthnormal constraint of the Hartree-Fock equation.) Assume also that there exist orthonormal functions $\{\tilde \psi_{1},\dots,\tilde\psi_{\lambda}\},\ \lambda\in\mathbb N$ localized near $\bigcup_{\alpha=1}^M\partial \Omega_{\alpha}$ such that $\mathcal L(\tilde \psi_{1},\dots,\tilde\psi_{\lambda})\perp\left\{\mathcal L(\psi_1,\dots,\psi_{N-\hat N})+\left(\sum_{\alpha=1}^N\mathcal L(\varphi^{\alpha}_1,\dots,\varphi^{\alpha}_{\hat N_{\alpha}})\right)\right\}$, and that atomic orbitals $\chi^{\alpha}_j$ whose reference points $q_j^{\alpha}$ are in $\Omega_{\alpha}$ or close to the boundary $\partial \Omega_{\alpha}$ are included in $\mathcal L\{\tilde \psi_{1},\dots,\tilde\psi_{\lambda}, \psi_1,\dots,\psi_{N-\hat N}\}+\left(\sum_{\alpha=1}^N\mathcal L(\varphi^{\alpha}_1,\dots,\varphi^{\alpha}_{\hat N_{\alpha}})\right)$.
Intuitively, $\{\psi_1,\dots,\psi_{N-\hat N}\}$ are occupied orbitals and $\{\tilde \psi_{1},\dots,\tilde\psi_{\lambda}\}$ are unoccupied orbitals. If the structures of the molecules $\alpha$ and $\beta$ are similar near $\partial\Omega_{\alpha}\cap\partial\Omega_{\beta}$, we would be able to choose $\psi_j$ and $\tilde\psi_k$ close to molecular orbitals of the molecule $\alpha$ or $\beta$ there. The properties and positions of the molecular orbitals are tabulated in Table \ref{motable}.

\begin{table}[h]
    \centering
    \caption{properties and positions of the molecular orbitals}
    \label{motable}
    \begin{tabular}{|c|c|c|}
        \hline
         & property & position\\ \hline
        $\varphi^{\alpha}_1,\dots,\varphi^{\alpha}_{\hat N_{\alpha}}$ & occupied & away from $\bigcup_{\alpha=1}^M\partial \Omega_{\alpha}$ \\ \hline
        $\varphi^{\alpha}_{\hat N_{\alpha}+1},\dots,\varphi_{\kappa_{\alpha}}^{\alpha}$ & unoccupied & away from $\bigcup_{\alpha=1}^M\partial \Omega_{\alpha}$ \\ \hline
        $\psi_1,\dots,\psi_{N-\hat N}$ & occupied & near $\bigcup_{\alpha=1}^M\partial \Omega_{\alpha}$ \\ \hline
        $\tilde \psi_{1},\dots,\tilde\psi_{\lambda}$ & unoccupied & near $\bigcup_{\alpha=1}^M\partial \Omega_{\alpha}$ \\ \hline
    \end{tabular}
\end{table}

If we relabel $\left(\bigcup_{\alpha=1}^M\{\varphi^{\alpha}_1,\dots,\varphi_{\hat N_{\alpha}}^{\alpha}\}\right)\cup\{\psi_1,\dots,\psi_{N-\hat N}\}$ by $1,\dots,N$ and denote it by $\Phi'=(\varphi'_1,\dots,\varphi'_N)$, from the assumptions we would be able to suppose that $\{\psi_1,\dots,\psi_{N-\hat N}\}$ and $\{\tilde \psi_{1},\dots,\tilde\psi_{\lambda}\}$ belong approximately to subspaces corresponding to different spectra of $\mathcal F(\Phi')$ separated by a gap, where $\mathcal F(\Phi)$ is the Fock operator associated with $E(\Phi)$.
We also relabel $\left(\bigcup_{\alpha=1}^M\{\varphi^{\alpha}_{\hat N_{\alpha}+1},\dots,\varphi_{\kappa_{\alpha}}^{\alpha}\}\right)\cup\{\tilde\psi_{1},\dots,\tilde\psi_{\lambda}\}$ as $\{\varphi'_{N+1},\dots,\varphi'_{\nu}\}$. We  denote by $q_j$ the reference point of the distribution of $\varphi_j'$. From the construction we can see that $\langle\varphi'_j,\varphi'_k\rangle=0$ unless $\varphi'_j=\varphi_{l}^{\alpha}$ and $\varphi'_k=\varphi_{m}^{\beta}$ for some $\alpha\neq \beta$, $l$ and $m$. When this holds, the reference points of $\varphi'_j$ and $\varphi'_k$ are apart from each other, and $\langle\varphi'_j,\varphi'_k\rangle$ is small. Thus we would be able to assume
\begin{equation}\label{myeq5.0.0.0}
\max_j\sum_{k\neq j}w_{jk}|\langle\varphi'_j,\varphi'_k\rangle|\leq \epsilon_0,
\end{equation}
for some small $\epsilon_0>0$.

(LMO) and (NI) with $\phi_j$ (resp., $\Phi^0$) replaced by $\varphi'_j$ (resp., $\Phi'$) would be plausible, since $\{\varphi_1^{\alpha},\dots,\varphi_{N_{\alpha}}^{\alpha}\}$ are localized for $\alpha=1,\dots,M$ and $\{\psi_1,\dots,\psi_{N-\hat N}\}$ and $\{\tilde \psi_{1},\dots,\tilde\psi_{\lambda}\}$ are localized near the boundaries $\partial\Omega_{\alpha},\ \alpha=1\dots,M$.
As for the condition (OI) (i) and (iii), $\{\psi_1,\dots,\psi_{N-\hat N}\}$ (resp., $\{\tilde \psi_{1},\dots,\tilde\psi_{\lambda}\}$) are approximate occupied (resp., unoccupied) orbitals corresponding to $\mathcal F(\Phi')$, and $\{\varphi^{\alpha}_1,\dots,\varphi_{\hat N_{\alpha}}^{\alpha}\}$ (resp., $\{\varphi^{\alpha}_{\hat N_{\alpha}+1},\dots,\varphi_{\kappa_{\alpha}}^{\alpha}\}$) are occupied (resp., unoccupied) orbitals corresponding to $\mathcal F^{\alpha}(\hat \Phi^{\alpha})$, where $\hat\Phi^{\alpha}:=(\varphi^{\alpha}_1,\dots,\varphi_{\hat N_{\alpha}}^{\alpha})$. Thus if we assume $\mathcal F(\Phi')$ is close to $\mathcal F^{\alpha}(\hat\Phi^{\alpha})$ as a differential operator in a certain region in $\Omega_{\alpha}$ and away from $\partial\Omega_{\alpha}$, the functions $\{\varphi'_{N+1},\dots,\varphi'_{\nu}\}$ and $\{\varphi'_1,\dots,\varphi'_N\}$ would be separated with respect to the spectrum of $\mathcal F(\Phi')$, and therefore, (OI) (i) and (iii) would hold.
The constant $\delta$ in (OI) (ii) would be somewhat small because of the localization of $\phi_j$.

\subsection{orthogonalization of the molecular orbitals}
Density matrices are orthogonal projections only when the basis $\{\phi_j\}$ is an orthonormal basis. Therefore, we need to construct an orthonormal basis $\{\phi_j\}$ from $\{\varphi'_j\}$. Here as in the argument above \eqref{myeq5.0.0.0} we can expect that $\{\varphi_j'\}$ is almost orthogonal.
We construct $\{\phi_j\}$ by the Schmidt orthogonalization, that is, assuming that we have constructed $\phi_1,\dots,\phi_k$ from $\varphi'_1,\dots,\varphi'_k$, we construct $\phi_{k+1}$ by $\tilde\varphi_{k+1}:=\varphi_{k+1}'-\sum_{l=1}^k\langle \phi_l,\varphi'_{k+1}\rangle\phi_l$ and $\phi_k:=\lVert\tilde\varphi_{k+1}\rVert^{-1}\tilde\varphi_{k+1}$.

Since $\mathcal L(\phi_1,\dots,\phi_{j-1})=\mathcal L(\varphi'_1,\dots,\varphi'_{j-1})$, the Schmidt orthogonalization can also be written as
\begin{align*}
\phi_1&:=\varphi'_1,\\
\phi_j&:=\lVert \tilde\varphi_j\rVert^{-1}\tilde\varphi_j,\ 2\leq j\leq \nu,\\
\tilde\varphi_j&:=\varphi'_j-\sum_{1\leq k,l\leq j-1}\{(A_{j-1}^{-1})_{kl}\langle\varphi'_{l},\varphi'_j\rangle\}\varphi'_{k},
\end{align*}
where $A_{j-1}$ is the $(j-1)\times(j-1)$ matrix defined by $(A_{j-1})_{kl}:=\langle\varphi'_{k},\varphi'_{l}\rangle,\ 1\leq k,l\leq j-1$ and $(A_{j-1}^{-1})_{kl}$ denotes the $(k,l)$-component of $A_{j-1}$ (Note that the orthogonal projection of $\varphi$ onto $\mathcal L(\varphi'_1,\dots,\varphi'_{j-1})$ is given by $\sum_{1\leq k,l\leq j-1}\{(A_{j-1}^{-1})_{kl}\langle\varphi'_{l},\varphi\rangle\}\varphi'_{k}$.)
Then by \eqref{myeq5.0.0.0} $A_{j-1}$ is decomposed as $A_{j-1}=I_{j-1}+\tilde A_{j-1}$, where $I_{j-1}$ is the $(j-1)\times(j-1)$ identity matrix and $\lVert \tilde A_{j-1}\rVert_{w, 1,\infty}\leq\epsilon_0$. Here the norm $\lVert\cdot\rVert_{w,1,\infty}$ is defined by
$$\lVert A\rVert_{w,1,\infty}:=\lVert (w_{jk}a_{jk})\rVert_{1,\infty}=\max \left\{\max_j\sum_kw_{jk}|a_{jk}|,\max_k\sum_jw_{jk}|a_{jk}|\right\},$$
for $A=(a_{jk})$. We have $\lVert B\tilde B\rVert_{w,1,\infty}\leq \lVert B\rVert_{w,1,\infty}\lVert \tilde B\rVert_{w,1,\infty}$ because of $w_{jl}\leq w_{jk}w_{kl}$ which follows from (W) (ii).

Let us define $\nu\times \nu$ matrices $\tilde{A'}=(\tilde \alpha_{jk})$  and $A'=(\alpha_{jk})$ by
$$\tilde{\alpha}_{kl}=\begin{cases}
|\langle \varphi'_k,\varphi_l'\rangle| &k\neq l\\
0& k=l
\end{cases},$$
and by $A':=\sum_{m=0}^{\infty}\tilde{(A')}^m$ respectively. We can easily see that by \eqref{myeq5.0.0.0} the right-hand side converges and $\lVert A'\rVert_{w,1,\infty}\leq(1-\epsilon_0)^{-1}$. Using the Neumann series $A_{j-1}^{-1}=\sum_{m=0}^{\infty}(-1)^m\tilde A_{j-1}^m$ for $A_{j-1}^{-1}=(I_{j-1}+\tilde A_{j-1})^{-1}$ and considering the expression of the components of products $\tilde A_{j-1}^m$ of matrices by the components of $\tilde A_{j-1}$, we can easily see that $|(A_{j-1}^{-1})_{kl}|\leq \alpha_{kl}$ for $1\leq k,l\leq j-1$.
Thus we have
$$\lVert\tilde\varphi_j-\varphi'_j\rVert\leq \sum_{1\leq k,l\leq j-1}|(A_{j-1}^{-1})_{kl}\langle\varphi'_{l},\varphi'_j\rangle|\leq \sum_{1\leq k,l\leq j-1}|\alpha_{kl}\langle\varphi'_{l},\varphi'_j\rangle|\leq\epsilon_0(1-\epsilon_0)^{-1},$$
and therefore,
$$1-\epsilon_1\leq\lVert\tilde\varphi_j\rVert\leq 1+\epsilon_1,$$
where $\epsilon_1:=\epsilon_0(1-\epsilon_0)^{-1}$.

We define an $\nu\times \nu$ matrix $S=(s_{jk})$ by
$$s_{jk}=\begin{cases}
\lVert \tilde\varphi_j\rVert^{-1}\sum_{l=1}^{j-1}(A_{j-1}^{-1})_{kl}\langle \varphi'_l,\varphi'_j\rangle &\quad k\leq j-1\\
0 &\quad \mathrm{otherwise}
\end{cases}.$$
Then we have
\begin{equation}\label{myeq5.0}
\phi_j:=\lVert\tilde\varphi_j\rVert^{-1}\varphi'_j-\sum_{k=1}^{j-1}s_{jk}\varphi'_{k},
\end{equation}
and
\begin{align*}
&\max_j\sum_kw_{jk}|s_{jk}|\\
&\quad\leq(1-\epsilon_1)^{-1}\max_j\sum_k\sum_{l<j}w_{jk}|(A_{j-1}^{-1})_{kl}||\langle\varphi'_l,\varphi'_j\rangle|\\
&\quad\leq(1-\epsilon_1)^{-1}\max_j\sum_k\sum_{l<j}w_{jk}|\alpha_{kl}||\langle\varphi'_l,\varphi'_j\rangle|\\
&\quad\leq(1-\epsilon_1)^{-1}\max_j\sum_k\sum_{l<j}(w_{jk}w_{kl}^{-1}w_{lj}^{-1})(w_{kl}|\alpha_{kl}|)(w_{lj}|\langle\varphi'_l,\varphi'_j\rangle|)\\
&\quad\leq (1-\epsilon_1)^{-1}\lVert A'\rVert_{w,1,\infty}\epsilon_0\leq\epsilon_0(1-\epsilon_0)^{-1}(1-\epsilon_1)^{-1}=\epsilon_2,
\end{align*}
where $\epsilon_2:=\epsilon_0(1-\epsilon_0)^{-1}(1-\epsilon_1)^{-1}$.
In the same way we obtain $\max_k\sum_jw_{jk}|s_{jk}|\leq\epsilon_2$, and therefore, $\lVert S\rVert_{w,1,\infty}\leq\epsilon_2$.
Now let us prove (LMO), (OI) and (NI) for $\phi_j$ using the conditions for $\varphi_j'$. We denote the quantities for $\varphi_j'$ by putting primes as $\tilde C'$, $u_{jl}'$, $\Lambda'_{jklm}:=\int(\varphi'_j)^*(x)\varphi'_k(x)\frac{1}{|x-y|}\varphi'_l(y)\varphi'_m(y)dxdy$,  etc.

\begin{proposition}\label{philmo}
Let $\{\phi_j\}$ be the orthonormal system constructed as above. Then we have (LMO) (i)--(vi) with the constants
\begin{align*}
\tilde\epsilon&:=\tilde \epsilon'(1-\epsilon_1)^{-4}+\epsilon_3\tilde C',\\
\tilde C&:=\{(1-\epsilon_1)^{-4}+\epsilon_3\}\tilde C',\\
\hat C&:=\{(1-\epsilon_1)^{-4}+\epsilon_3\}\hat C',\\
\check C&:=\{(1-\epsilon_1)^{-2}+(2\epsilon_2(1-\epsilon_1)^{-1}+\epsilon_2^2)\}\check C',
\end{align*}
(OI) (i)--(iii) with the constants
\begin{align*}
\epsilon:=&(1-\epsilon_1)^{-2}\epsilon'+(2\epsilon_2(1-\epsilon_1)^{-1}+\epsilon_2^2)(\check C'+\breve C')+2\epsilon_4\tilde\epsilon'(1-\epsilon_1)^{-2}+\epsilon_3(\tilde C'+\hat C'),\\
\delta:=&(1-\epsilon_1)^{-2}\delta'+(2\epsilon_2(1-\epsilon_1)^{-1}+\epsilon_2^2)(\check C'+\breve C')+2\epsilon_4\tilde\epsilon'(1-\epsilon_1)^{-2}+\epsilon_3(\tilde C'+\hat C'),\\
\gamma:=&(1+\epsilon_1)^{-2}\gamma'-2(2\epsilon_2(1-\epsilon_1)^{-1}+\epsilon_2^2)(\check C'+\breve C')-2\epsilon_4(\tilde C'+2\hat C')(1-\epsilon_1)^{-2}\\
&-2\epsilon_3(\tilde C'+\hat C'),
\end{align*}
and (NI) with the constant
$$\breve C:=\{(1-\epsilon_1)^{-2}+2\epsilon_2(1-\epsilon_1)^{-1}+\epsilon_2^2\}\breve C',$$
where $\epsilon_3:=4\epsilon_2(1-\epsilon_1)^{-3}+6\epsilon_2^2(1-\epsilon_1)^{-2}+4\epsilon_2^3(1-\epsilon_1)^{-1}+\epsilon_2^4$ and $\epsilon_4:=\max\{(1-\epsilon_1)^{-2}-1,1-(1+\epsilon_1)^{-2}\}$.
\end{proposition}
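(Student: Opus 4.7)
The plan is to expand every integral and matrix element appearing in (LMO), (OI), (NI) using the triangular matrix $T=(t_{jk})$ defined by $t_{jj}=\lVert\tilde\varphi_j\rVert^{-1}$, $t_{jk}=-s_{jk}$ for $k<j$, and $t_{jk}=0$ for $k>j$, so that $\phi_j=\sum_k t_{jk}\varphi'_k$. Writing $T=D+S'$ with $D$ the diagonal part and $S'=T-D$ strictly lower triangular, the two quantitative inputs established above the statement are the diagonal bound $(1+\epsilon_1)^{-1}\le|D_{jj}|\le(1-\epsilon_1)^{-1}$ (equivalently $\bigl||D_{jj}|^2-1\bigr|\le\epsilon_4$) and the off-diagonal bound $\lVert S'\rVert_{w,1,\infty}\le\epsilon_2$. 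Submultiplicativity of $\lVert\cdot\rVert_{w,1,\infty}$, which follows from (W) (ii), is what makes products of such factors combine correctly in the weighted norm.

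For the (LMO) estimates I write
\[
[jk|lm]_{\phi}=\sum_{j'k'l'm'}\overline{t_{jj'}}\,t_{kk'}\,\overline{t_{ll'}}\,t_{mm'}\,\Lambda'_{j'k'l'm'}
\]
and split the product of four $T$-factors into $16$ pieces according to whether each is $D$ or $S'$. The unique ``all-$D$'' piece equals $\overline{D_{jj}}D_{kk}\overline{D_{ll}}D_{mm}\Lambda'_{jklm}$, which reproduces the primed bound for $\Lambda'_{jklm}$ (from (LMO) (i), (iii), or (v)) with the diagonal factor $(1-\epsilon_1)^{-4}$. Each of the remaining $15$ pieces contains $\ge 1$ $S'$-factor; applying Lemma \ref{intnorm} (b) or (c) with the role of the coefficient matrix $A'$ played by a product of $|S'|$'s and bounding its weighted norm via submultiplicativity gives $\tilde C'$ or $\hat C'$ times $(1-\epsilon_1)^{-(4-p)}\epsilon_2^{p}$, where $p\ge 1$ is the number of $S'$-factors. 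The binomial sum yields
\[
((1-\epsilon_1)^{-1}+\epsilon_2)^4-(1-\epsilon_1)^{-4}=\epsilon_3,
\]
exactly the cross coefficient stated in the proposition, giving $\tilde\epsilon=(1-\epsilon_1)^{-4}\tilde\epsilon'+\epsilon_3\tilde C'$ and the analogous $\tilde C,\hat C$. The two-factor analogs (two copies of $T$ in $\langle\nabla\phi_j,\nabla\phi_k\rangle$ and in the nuclear integral) collapse to $((1-\epsilon_1)^{-1}+\epsilon_2)^2=(1-\epsilon_1)^{-2}+2\epsilon_2(1-\epsilon_1)^{-1}+\epsilon_2^2$, producing $\check C$ and $\breve C$.

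For (OI) I further compare $\mathcal F(\Phi^0)$ with $\mathcal F(\Phi')$. Each $\langle\phi_k,\mathcal F(\Phi^0)\phi_j\rangle$ has a kinetic-plus-nuclear piece with two $T$-factors (handled by the primed (LMO) (vi) and (NI) via $\check C'$, $\breve C'$) and a two-electron piece with four $T$-factors, two from the bilinear form and two from the density matrix $P^0=\sum_{l\in J_o}|\phi_l\rangle\langle\phi_l|$. I decompose the density matrix contribution as $\sum_{l\in J_o}|D_{ll}|^2\varphi'_l\varphi'^*_l+\text{(cross)}$ and then split $|D_{ll}|^2=1+(|D_{ll}|^2-1)$ in the all-$D$ piece. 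The ``$1$''-piece combines with the kinetic-plus-nuclear all-$D$ to reproduce $\overline{D_{kk}}D_{jj}\langle\varphi'_k,\mathcal F(\Phi')\varphi'_j\rangle$, bounded by $(1-\epsilon_1)^{-2}\epsilon'$ via the primed (OI) (i). The ``$|D_{ll}|^2-1$''-piece is bounded using the primed (LMO) (i) on $\sum_l|\Lambda'_{kjll}|$ and $\sum_l|\Lambda'_{kllj}|$, which after summing with (LMO) (ii) and (W) (i) gives $2\epsilon_4\tilde\epsilon'(1-\epsilon_1)^{-2}$ (factor $2$ for Coulomb and exchange). The cross pieces in the density matrix yield $\epsilon_3(\tilde C'+\hat C')$ by Lemma \ref{intnorm}; cross pieces in the kinetic-plus-nuclear part yield $(2\epsilon_2(1-\epsilon_1)^{-1}+\epsilon_2^2)(\check C'+\breve C')$. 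Collecting these terms recovers $\epsilon$ exactly, and with $\epsilon'$ replaced by $\delta'$ also $\delta$. The lower bound (OI) (iii) is the mirror argument: the primed gap $\gamma'$ survives with coefficient $(1+\epsilon_1)^{-2}$ from the diagonal lower bound $|D_{jj}|^2\ge(1+\epsilon_1)^{-2}$, while the same kinds of error terms are subtracted; the precise coefficient $\tilde C'+2\hat C'$ appearing in the $\epsilon_4$-error reflects the extra integrals $[kk|jj]$ and $[kj|jk]$ entering via $\langle kj||kj\rangle=[kj|jk]-[kk|jj]$ beyond the Coulomb and exchange terms already in $\mathcal F$. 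Finally, (NI) for $\phi$ is the simplest instance: only two $T$-factors, giving $\breve C=((1-\epsilon_1)^{-1}+\epsilon_2)^2\breve C'$.

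The hard part is purely bookkeeping: the two-electron matrix elements carry up to four $T$-factors, and in the (OI) analysis two further factors are implicit in the density matrix of $\Phi^0$; each $D/S'$ pattern has to be bounded uniformly. The cleanest device is to invoke Lemma \ref{intnorm} once per pattern, with the product of $|S'|$-factors playing the role of $A'$ and $D$-factors absorbed into the multiplicative constant $(1-\epsilon_1)^{-(\text{number of }D\text{'s})}$; submultiplicativity of $\lVert\cdot\rVert_{w,1,\infty}$ then handles the weighted-norm accounting. No new analytic input is needed beyond the primed assumptions, (W) (ii), and the bounds on $D$ and $S'$ already established just above the statement.
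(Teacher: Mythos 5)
Your proposal takes essentially the same route as the paper: the decomposition $\phi_j=\lVert\tilde\varphi_j\rVert^{-1}\varphi'_j-\sum_{k<j}s_{jk}\varphi'_k$ is exactly your $T=D+S'$, the paper's $I_\tau$ (resp.\ $\check I_\tau$, $\breve I_\tau$) are the sums of the $\binom{4}{\tau}$ (resp.\ $\binom{2}{\tau}$) mixed $D$/$S'$ patterns, and the paper's $\epsilon_3$ and $2\epsilon_2(1-\epsilon_1)^{-1}+\epsilon_2^2$ are exactly your binomial identities $((1-\epsilon_1)^{-1}+\epsilon_2)^4-(1-\epsilon_1)^{-4}$ and $((1-\epsilon_1)^{-1}+\epsilon_2)^2-(1-\epsilon_1)^{-2}$. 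The $|D_{ll}|^2=1+(|D_{ll}|^2-1)$ split for the Fock density and the source of the coefficient $\tilde C'+2\hat C'$ in the $\gamma$ error also match the paper's handling of \eqref{myeq5.2}, \eqref{myeq5.5} and of (OI)(iii).

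One imprecision worth flagging: your appeal to Lemma~\ref{intnorm}(b),(c) for the $p\geq1$ pieces is a bit off-key. That lemma bounds operator norms of the \emph{summed} quantities $\sum_{l,m}\Lambda_{jklm}\alpha_{lm}$, whereas (LMO)(i),(iii),(v) demand \emph{pointwise} bounds $|[jk|lm]|\leq(\cdot)w_{jk}^{-1}w_{lm}^{-1}$ together with freshly constructed decay matrices $v_{jl}$, $u_{jl}$, $\breve u_{jkl}$ obeying the companion summation conditions (LMO)(ii),(iv) and the $\sum_l$-bound in (NI). The paper gets these by telescoping (W)(ii) at the level of individual matrix elements (for instance $w_{j\tilde j}^{-1}w_{\tilde j\tilde k}^{-1}w_{\tilde kk}^{-1}\leq w_{jk}^{-1}$) and then defining the new auxiliary matrices $\hat u_{jl},\check u_{jl},\tilde v_{jl}$ explicitly so they inherit the correct row/column sums. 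Your invocation of ``submultiplicativity of $\lVert\cdot\rVert_{w,1,\infty}$'' is the same engine, but you should state explicitly how the new weight matrices are assembled, since the Proposition asserts not just constants but that the structural form of (LMO)/(NI) is preserved. This is an expository gap, not a conceptual one -- once filled it reproduces the paper's argument.
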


\begin{remark}
The constants $\epsilon_k,\ k=1,\dots,4$ are small for small $\epsilon_0$.
\end{remark}

We shall prove Proposition \ref{philmo} in the order of (LMO), (OI), (NI).
\begin{proof}[proof of (LMO) for $\phi_j$]
(i) By \eqref{myeq5.0} we have
\begin{equation}\label{myeq5.1}
[jk|lm]=(\lVert\tilde\varphi_j\rVert\lVert\tilde\varphi_k\rVert\lVert\tilde\varphi_l\rVert\lVert\tilde\varphi_m\rVert)^{-1}\Lambda'_{jklm}+\sum_{\tau=1}^4I_{\tau},
\end{equation}
where $I_{\tau}$ is the sum of terms including $\tau$ factors like $s_{jk}$. For example,
\begin{align*}
|I_4|&=\left|\sum_{\tilde j,\tilde k,\tilde l,\tilde m}s_{j\tilde j}s_{k\tilde k}\Lambda'_{\tilde j\tilde k\tilde l\tilde m}s_{l\tilde l}s_{m\tilde m}\right|\\
&\leq\sum_{\tilde j,\tilde k,\tilde l,\tilde m}w_{\tilde j\tilde k}^{-1}(u_{\tilde j\tilde l}')^{-1}w_{\tilde l\tilde m}^{-1}|s_{j\tilde j}||s_{k\tilde k}||s_{l\tilde l}||s_{m\tilde m}|\\
&\leq\sum_{\tilde j,\tilde k,\tilde l,\tilde m}(w_{k\tilde k}^{-1}w_{\tilde j\tilde k}^{-1}w_{j\tilde j}^{-1})(u_{\tilde j\tilde l}')^{-1}(w_{l\tilde l}^{-1}w_{\tilde l\tilde m}^{-1}w_{m\tilde m}^{-1})\\
&\hspace{150pt}\cdot|w_{j\tilde j}s_{j\tilde j}||w_{k\tilde k}s_{k\tilde k}||w_{l\tilde l}s_{l\tilde l}||w_{m\tilde m}s_{m\tilde m}|\\
&\leq w_{kj}^{-1}w_{lm}^{-1}\hat u_{jl}^{-1}\epsilon_2^2,
\end{align*}
where $\hat u_{jl}^{-1}:=\sum_{\tilde j,\tilde l}(u_{\tilde j\tilde l}')^{-1}|w_{j\tilde j}s_{j\tilde j}||w_{l\tilde l}s_{l\tilde l}|=(S_wU'S_w^T)_{jl}$. Here matrices are defined by $U':=((u'_{jk})^{-1})$ and $S_w:=(|w_{jk}s_{jk}|)$. We have $\sum_j\hat u_{jl}^{-1}\leq \lVert S\rVert_{w,1,\infty}^2\sum_{\tilde j}(u_{\tilde j\tilde l}')^{-1}\leq\epsilon_2^2\tilde C'$. 
In the same way we can see that $|I_1|\leq 4\epsilon_2(1-\epsilon_1)^{-3}\tilde C'w_{jk}^{-1}\check u_{jl}^{-1}w_{lm}^{-1}$, $|I_2|\leq 6\epsilon_2^2(1-\epsilon_1)^{-2}\tilde C'w_{jk}^{-1}\check u_{jl}^{-1}w_{lm}^{-1}$ and $|I_3|\leq 4\epsilon_2^3(1-\epsilon_1)^{-1}\tilde C'w_{jk}^{-1}\check u_{jl}^{-1}w_{lm}^{-1}$, where
$$\check u_{jl}^{-1}:=\max\{ (\tilde C')^{-1}(u_{jl}')^{-1}, \epsilon_2^{-2}(\tilde C')^{-1}\hat u_{jl}^{-1}, \epsilon_2^{-1}(\tilde C')^{-1}(S_wU')_{jl},\epsilon_2^{-1}(\tilde C')^{-1}(U'S_w^T)_{jl}\},$$
and $\sum_j\check u_{jl}^{-1}\leq 1$. Using these estimates it follows from \eqref{myeq5.1} that
$$|[jk|lm]|\leq\tilde\epsilon'(1-\epsilon_1)^{-4} (v_{jl}')^{-1}w_{jk}^{-1}w_{lm}^{-1}+\epsilon_3\tilde C'\check u_{jl}^{-1}w_{jk}^{-1}w_{lm}^{-1}.$$
Thus if we set
$$\tilde v_{jl}^{-1}:=\tilde\epsilon'(1-\epsilon_1)^{-4} (v_{jl}')^{-1}+\epsilon_3\tilde C'\check u_{jl}^{-1},$$
$\tilde\epsilon:=\max_l\sum_j\tilde v_{jl}^{-1}$ and $v_{jl}^{-1}:=(\sum_j\tilde v_{jl})^{-1}\tilde v_{jl}^{-1}$, we obtain the estimate.

(iii) In a similar way as in (i) we obtain
$$|[jk|lm]|\leq \{(1-\epsilon_1)^{-4}+\epsilon_3\}\tilde C'\check u_{jl}^{-1}w_{jk}^{-1}w_{lm}^{-1}.$$
Hence if we set $\tilde C:=\{(1-\epsilon_1)^{-4}+\epsilon_3\}\tilde C'$ and $u_{jl}^{-1}:=\{(1-\epsilon_1)^{-4}+\epsilon_3\}\tilde C'\check u_{jl}^{-1}$, we obtain the estimate.

(v) In a similar way as in (i) we obtain
$$|[jk|lm]|\leq \{(1-\epsilon_1)^{-4}+\epsilon_3\}\hat C' w_{jk}^{-1}w_{lm}^{-1}.$$
Thus if we set $\hat C:=\{(1-\epsilon_1)^{-4}+\epsilon_3\}\hat C'$ , we obtain the estimate.

(vi) We have
$$\langle \nabla\phi_j,\nabla\phi_k\rangle=(\lVert\tilde\varphi_j\rVert\lVert\tilde\varphi_k\rVert)^{-1}\langle \nabla\varphi'_j,\nabla\varphi'_k\rangle+\hat I_1+\hat I_2,$$
where $\hat I_{\tau}$ is the sum of terms including $\tau$ factors like $s_{jk}$. It is easily seen that the sum with respect to $k$ of the absolute value of the right-hand side is bounded from above by
$$\check C:=(1-\epsilon_1)^{-2}\check C'+(2\epsilon_2(1-\epsilon_1)^{-1}+\epsilon_2^2)\check C'.$$
\end{proof}

\begin{proof}[proof of (OI) for $\phi_j$]
(i) We have
\begin{equation}\label{myeq5.1.1}
\begin{split}
|\langle \phi_j,\mathcal F(\Phi)\phi_k\rangle|=&|(\lVert \tilde\varphi_j\rVert\lVert \tilde\varphi_k\rVert)^{-1}\langle \varphi'_j,\mathcal F(\Phi')\varphi'_k\rangle|\\
&+|\langle \phi_j,\mathcal F(\Phi)\phi_k\rangle-(\lVert \tilde\varphi_j\rVert\lVert \tilde\varphi_k\rVert)^{-1}\langle \varphi'_j,\mathcal F(\Phi')\varphi'_k\rangle|.
\end{split}
\end{equation}
Let us consider the second term in the right-hand side. Using (LMO) (vi) for $\varphi'_j$ the kinetic energy term of the second term is estimated as
\begin{equation}\label{myeq5.1.2}
\sum_k|\langle\nabla\phi_j,\nabla\phi_k\rangle-(\lVert \tilde\varphi_j\rVert\lVert \tilde\varphi_k\rVert)^{-1}\langle\nabla\varphi'_j,\nabla\varphi'_k\rangle|\leq (2\epsilon_2(1-\epsilon_1)^{-1}+\epsilon_2^2)\check C'.
\end{equation}
Using (NI) for $\varphi'_j$ it is easily seen that the Coulomb interaction term between the molecular orbitals and the nuclei is bounded as
\begin{equation}\label{myeq5.1.3}
\sum_k\left|\sum_l\langle \phi_j,V\phi_k\rangle-(\lVert \tilde\varphi_j\rVert\lVert \tilde\varphi_k\rVert)^{-1}\langle \varphi'_j,V\varphi'_k\rangle\right|\leq (2\epsilon_2(1-\epsilon_1)^{-1}+\epsilon_2^2)\breve C'.
\end{equation}

The term $\sum_l(\Lambda_{jkll}-(\lVert \tilde\varphi_j\rVert\lVert \tilde\varphi_k\rVert)^{-1}\Lambda'_{jkll})$ is decomposed as follows.
\begin{equation}\label{myeq5.2}
\begin{split}
&\sum_l(\Lambda_{jkll}-(\lVert \tilde\varphi_j\rVert\lVert \tilde\varphi_k\rVert)^{-1}\Lambda'_{jkll})\\
&\quad=\sum_l(\Lambda_{jkll}-(\lVert \tilde\varphi_j\rVert\lVert \tilde\varphi_k\rVert\lVert \tilde\varphi_l\rVert^2)^{-1}\Lambda'_{jkll})+\sum_l(\lVert \tilde\varphi_l\rVert^{-2}-1)(\lVert \tilde\varphi_j\rVert\lVert \tilde\varphi_k\rVert)^{-1}\Lambda'_{jkll}.
\end{split}
\end{equation}
Here we note that $|\lVert \tilde\varphi_l\rVert^{-2}-1|\leq \epsilon_4$. Since $j\neq k$, by (LMO) (i) for $\varphi'_j$ the second term in the right-hand side of \eqref{myeq5.2} is estimated as
\begin{equation}\label{myeq5.3}
\begin{split}
&\sum_k\left|\sum_l(\lVert \tilde\varphi_l\rVert^{-2}-1)(\lVert \tilde\varphi_j\rVert\lVert \tilde\varphi_k\rVert)^{-1}\Lambda'_{jkll}\right|\\
&\quad\leq\sum_k\sum_l\epsilon_4(1-\epsilon_1)^{-2}\tilde\epsilon'(v_{jl}')^{-1}w_{jk}^{-1}\leq\epsilon_4\tilde\epsilon'(1-\epsilon_1)^{-2}.
\end{split}
\end{equation}
The first term in the right-hand side of \eqref{myeq5.2} is written as a sum of terms $\check I_{\tau},\ 1\leq \tau\leq 4$ in which $\tau$ factors like $\varphi'_j$ in $\Lambda'_{jkll}$ are replaced by those like $\sum_{\tilde j}s_{j\tilde j}\varphi'_{\tilde j}$. If all factors are replaced, the term can be estimated as follows. Using (LMO) (iii) for $\varphi'_j$ we have
\begin{align*}
|\check I_4|&=\sum_k\left|\sum_{l,\tilde j,\tilde k,\tilde l,\tilde m}\Lambda'_{\tilde j\tilde k\tilde l\tilde m}s_{j\tilde j}s_{k\tilde k}s_{l\tilde l}s_{l\tilde m}\right|\\
&\leq\sum_k\sum_{l,\tilde j,\tilde k,\tilde l,\tilde m} w_{\tilde j\tilde k}^{-1}(u'_{\tilde j\tilde l})^{-1}w_{\tilde l\tilde m}^{-1}|s_{j\tilde j}s_{k\tilde k}s_{l\tilde l}s_{l\tilde m}|\\
&\leq\sum_k\sum_{l,\tilde j,\tilde k,\tilde l,\tilde m}w_{\tilde j\tilde k}^{-1}(u'_{\tilde j\tilde l})^{-1}|s_{j\tilde j}||s_{k\tilde k}||s_{l\tilde l}||s_{l\tilde m}|\leq \epsilon_2^4\tilde C',
\end{align*}
where the sums are accumulated in the order of $\tilde m,l,\tilde l,k,\tilde k,\tilde j$. The estimate for the summation with respect to $j$ is similar. In similar ways we can see that $|\check I_1|\leq 4\epsilon_2(1-\epsilon_1)^{-3}\tilde C'$, $|\check I_2|\leq 6\epsilon_2^2(1-\epsilon_1)^{-2}\tilde C'$ and $|\check I_3|\leq 4\epsilon_2^3(1-\epsilon_1)^{-1}\tilde C'$. Hence we obtain
\begin{equation}\label{myeq5.4}
\left|\sum_l(\Lambda_{jkll}-(\lVert \tilde\varphi_j\rVert\lVert \tilde\varphi_k\rVert\lVert \tilde\varphi_l\rVert^2)^{-1}\Lambda'_{jkll})\right|\leq\epsilon_3\tilde C'.
\end{equation}

The term $\sum_l(\Lambda_{jllk}-(\lVert \tilde\varphi_j\rVert\lVert \tilde\varphi_k\rVert)^{-1}\Lambda'_{jllk})$ is decomposed as
\begin{equation}\label{myeq5.5}
\begin{split}
&\sum_l(\Lambda_{jllk}-(\lVert \tilde\varphi_j\rVert\lVert \tilde\varphi_k\rVert)^{-1}\Lambda'_{jllk})\\
&\quad=\sum_l(\Lambda_{jllk}-(\lVert \tilde\varphi_j\rVert\lVert \tilde\varphi_k\rVert\lVert \tilde\varphi_l\rVert^2)^{-1}\Lambda'_{jllk})+\sum_l(1-\lVert \tilde\varphi_l\rVert^{-2})(\lVert \tilde\varphi_j\rVert\lVert \tilde\varphi_k\rVert)^{-1}\Lambda'_{jllk}.
\end{split}
\end{equation}
The second term is bounded by $\epsilon_4\tilde\epsilon'(1-\epsilon_1)^{-2}$ in the same way as in \eqref{myeq5.3}. The first term is decomposed into $\sum_{\tau=1}^4\breve I_{\tau}$ in the same way as that for $\Lambda_{jkll}$. The term $\breve I_4$ is estimated as follows.
\begin{align*}
|\breve I_4|&=\sum_k\left|\sum_{l,\tilde j,\tilde k,\tilde l,\tilde m}\Lambda'_{\tilde j\tilde l\tilde m\tilde k}s_{j\tilde j}s_{k\tilde k}s_{l\tilde l}s_{l\tilde m}\right|\\
&\leq\sum_k\sum_{l,\tilde j,\tilde k,\tilde l,\tilde m} \hat C'w_{\tilde j\tilde l}^{-1}w_{\tilde m\tilde k}^{-1}|s_{j\tilde j}s_{k\tilde k}s_{l\tilde l}s_{l\tilde m}|\leq \epsilon_2^4\hat C',
\end{align*}
where the sums are accumulated in the order of $k,\tilde k,\tilde m,l,\tilde l,\tilde j$. The estimate for the summation with respect to $j$ is similar. The terms $\breve I_{\tau},\ \tau=1,2,3$  are estimated in similar ways. Hence we have
\begin{equation}\label{myeq5.6}
\left|\sum_l(\Lambda_{jllk}-(\lVert \tilde\varphi_j\rVert\lVert \tilde\varphi_k\rVert\lVert \tilde\varphi_l\rVert^2)^{-1}\Lambda'_{jllk})\right|\leq\epsilon_3\hat C'.
\end{equation}
Combining \eqref{myeq5.1.1}--\eqref{myeq5.6} we obtain
\begin{align*}
&\max_{j\in J_o}\sum_{k\in J_u}|\langle \phi_j,\mathcal F(\Phi)\phi_k\rangle|\\
&\quad\leq(1-\epsilon_1)^{-2}\epsilon'+(2\epsilon_2(1-\epsilon_1)^{-1}+\epsilon_2^2)(\check C'+\breve C')+2\epsilon_4\tilde\epsilon'(1-\epsilon_1)^{-2}+\epsilon_3(\tilde C'+\hat C').
\end{align*}
The same estimate for the summation with respect to $j$ is also obtained in the same way.
Thus if we set the right-hand side by $\epsilon$ we obtain the estimate.

The estimates of (ii) and (iii) are obtained in the same way as in (i).
\end{proof}

\begin{proof}[proof of (NI) for $\phi_j$]
In a manner similar to that in the proof of (LMO) we can see that the left-hand side of \eqref{myeq1.3} is bounded from above by
$$(1-\epsilon_1)^{-2}(\breve u_{jkl}')^{-1} w_{jk}^{-1}+(2\epsilon_2(1-\epsilon_1)^{-1}+\epsilon_2^2)\bar u_{jkl}^{-1}w_{jk}^{-1},$$
where $\bar u_{jkl}^{-1}$ is defined in a manner similar to that of $\check u_{jl}^{-1}$ in the proof of (LMO) from $(\breve u_{jkl}')^{-1}$. Thus if we set $\breve u_{jkl}^{-1}:=(1-\epsilon_1)^{-2}(\breve u_{jkl}')^{-1}+(2\epsilon_2(1-\epsilon_1)^{-1}+\epsilon_2^2)\breve C'\bar u_{jkl}^{-1}$ and $\breve C:=\max_{jk}\sum_l\breve u_{jkl}^{-1}$, we obtain the estimate.
\end{proof}

\subsection{Interpretation of the main theorem}
By Theorem \ref{mainthm} there exists a density matrix $P^{\infty}=(p^{\infty}_{jk})$ close to $P^0$. Let us consider the electronic density as a typical quantity of an electronic structure. By \eqref{myeq1.1} using a density matrix $P=(p_{jk})$ the electronic density $\rho(x)$ is given as follows (see e.g. \cite{ZO}[Section 3.4]).
$$\rho(x):=\sum_{l=1}^N|\varphi_l(x)|^2=\sum_{j,k}p_{jk}\phi_j(x)\phi_k^*(x).$$
By the localization of $\{\phi_j\}$ we may assume only a few orbitals have significant contribution to the electronic density in a small region $\tilde \Omega\subset\mathbb R^3$. Let $\phi_1,\dots,\phi_{j_0}$ be such orbitals. Then the expectation value of the number of electrons found in $\tilde \Omega$ is approximately given by
$$\int_{\tilde \Omega}\rho(x)dx\simeq \sum_{1\leq j,k\leq j_0}p_{jk}\int_{\tilde \Omega}\phi_j(x)\phi_k^*(x)dx.$$
In particular, if $j_0=1$, we have $\int_{\tilde \Omega}\rho(x)dx\simeq p_{11}\int_{\tilde \Omega}|\phi_1(x)|^2dx$. Recall also that if the reference point $q_1$ of $\phi_1$ is away from the boundary $\bigcup_{\alpha=1}^M\Omega_{\alpha}$, $\phi_1$ is close to some $\varphi^{\alpha}_j$. Therefore, since $P^{\infty}$ is close to $P^0$, the local electronic density after the molecular connection would be close to that of the original molecular fragment, when $\tilde \Omega$ is away from the boundary $\bigcup_{\alpha=1}^M\Omega_{\alpha}$.

\bigskip

\noindent\textbf{Acknowledgment}
This work was supported by JSPS KAKENHI Grant Number JP23K13030.


\begin{thebibliography}{99}
\bibitem{AMS} Absil P.-A., Mahony R. and Sepulchre R.,
Optimization algorithms on matrix manifolds.
Princeton University Press, Princeton, 2008.
\bibitem{CKL}Canc\`es E., Kemlin G. and Levitt A.,
Convergence analysis of direct minimization and self-consistent iterations.
SIAM J. Matrix Anal. Appl. 42 (2021), 243-274
\bibitem{Ja} Jackson J. D.,
Classical Electrodynamics, 2nd edn.
John Wiley \& Sons, New York, 1975.
\bibitem{Je} Jensen F.,
Introduction to computational chemistry, 3rd edn.
John Wiley \& Sons, West Sussex, 2017.
\bibitem{SI} Sato H. and Iwai T.,
Optimization algorithms on the Grassmann manifold with application to matrix eigenvalue problems.
Japan J. Indust. Appl. Math. 31 (2014), 355--400.
\bibitem{ZO} Szabo A. and Ostlund N. S.,
Modern quantum chemistry.
McGraw-Hill, New York, 1989.
\bibitem{Ze} Zeidler E.,
Nonlinear functional analysis and its applications I.
Springer, New York, 1986.
\end{thebibliography}
\end{document}